\numberwithin{equation}{section}
\definecolor{astral}{RGB}{46,116,181}
\DeclareMathAlphabet{\mathpzc}{OT1}{pzc}{m}{it}
\DeclareFontFamily{OT1}{pzc}{}
\DeclareFontShape{OT1}{pzc}{m}{it}{<-> s * [0.900] pzcmi7t}{}
\DeclareMathAlphabet{\mathpzc}{OT1}{pzc}{m}{it}
\newlength{\dhatheight}
\tikzset{
    onen/.style={
        draw,
        text=black,
        fill=red,
        minimum height=1.5cm,
        minimum width=3cm},
     twoo/.style={
        draw,
        text=black,
        fill=orange,
        minimum height=1.5cm,
        minimum width=3cm},
    thrree/.style={
        text=black,
        draw,
        minimum height=1.5cm,
        minimum width=3cm,
        left color=green, right color=green},
    fouur/.style={
        draw,
        text=black,
        fill=orange,
        minimum height=1.5cm,
        minimum width=4cm}}
\DeclareMathAlphabet\mathbfcal{OMS}{cmsy}{b}{n}
\definecolor{darkslategray}{rgb}{0.18, 0.31, 0.31}
\definecolor{warmblack}{rgb}{0.0, 0.26, 0.26}
\def\BState{\State\hskip-\ALG@thistlm}
\newtheorem{theorem}{Theorem}[section]
\theoremstyle{definition}
\newtheorem{definition}{Definition}[section]
\begin{document}

\begin{frontmatter}

\title{ \textcolor{warmblack}{\bf Modeling Control, Lockdown \& Exit Strategies for COVID-19 Pandemic in India
}}

\cortext[cor1]{Corresponding authors}

\author[label1]{Madhab Barman}
\address[label1]{Department of Mathematics, Indian Institute of Information Technology Design and Manufacturing Kancheepuram, Chennai-600127, India}

\author[label1]{Snigdhashree Nayak}

\author[label2]{Manoj K. Yadav}
\address[label2]{Department of Mathematics, \'{E}cole Centrale School of Engineering, Mahindra University, Hyderabad - 500043, India}

\author[label3]{Soumyendu Raha\corref{cor1}}\address[label3]{Department of Computational and Data Sciences, Indian Institute of Science, Bangalore-560012, India}\ead{raha@iisc.ac.in}

\author[label1]{Nachiketa Mishra\corref{cor1}}
\ead{nmishra@iiitdm.ac.in}

\begin{abstract}
COVID-19--a viral infectious disease--has quickly emerged as a global pandemic infecting millions of people with a significant number of deaths across the globe. The symptoms of this disease vary widely. Depending on the symptoms an infected person is broadly classified into two categories namely, asymptomatic and symptomatic. Asymptomatic individuals display mild or no symptoms but continue to transmit the infection to otherwise healthy individuals. This particular aspect of asymptomatic infection poses a major obstacle in managing and controlling the transmission of the infectious disease. In this paper, we attempt to mathematically model the spread of COVID-19 in India under various intervention strategies. We consider SEIR type epidemiological models, incorporated with India specific social contact matrix representing contact structures among different age groups of the population. Impact of various factors such as presence of asymptotic individuals, lockdown strategies, social distancing practices, quarantine, and hospitalization on the disease transmission is extensively studied. Numerical simulation of our model is matched with the real COVID-19 data of India till May 15, 2020 for the purpose of estimating the model parameters. Our model with zone-wise lockdown is seen to give a decent prediction for July 20, 2020.    
\end{abstract}

\begin{keyword}
COVID-19, SARS-CoV-2, Epidemiological models, SEIR model, Infectious disease control, Intervention strategies, Reproduction number, Stability
\end{keyword}

\end{frontmatter}


\section{Introduction}\label{sec:01}
COVID-19 is an infectious disease caused by the novel human coronavirus named SARS-CoV-2. Towards the end of December 2019, the first few cases of the COVID-19 outbreak was reported as a mysterious pneumonia from Wuhan in the Hubei Province of China. In a short span of time, this infectious disease has rapidly spread across the world and has catapulted into a global pandemic. As of 15th July 2020, the virus has infected over 13.5 million people with more than 580,000 confirmed deaths globally. \\

\noindent The symptoms of SARS-CoV-2 virus infection vary widely with most people only experiencing mild to moderate respiratory illness symptoms. A small proportion of people develop severe respiratory complications often requiring ICU and ventilator support \cite{India:critical}, except few regions \cite{Seattle:critical, JAMA:2020} like Seattle and Lombardy (Italy), where the percentage is higher. Based on clinical data, elderly people have been found to be at greater risk of experiencing acute respiratory distress symptoms with high mortality rate as compared to people of younger age. Study also suggests that approximately $80\%$ of the infected individuals are asymptomatic carriers who experience mild or no symptoms but continue to transmit the virus to otherwise healthy people. This has caused the detection and containment of the SARS-CoV-2 virus transmission a very challenging problem for civic authorities.\\

\noindent The COVID-19 pandemic is inflicting significantly high mortality, straining public healthcare systems and causing severe socio-economic distress globally. In the absence of any potent vaccine or effective pharmaco-medical treatments available, all efforts towards the pandemic management and mitigation have largely focused on non-pharmaceutical interventions like social distancing, lockdowns, contact tracing, quarantine and isolation. Intensive testing, contact tracing, and isolation of cases has to a large extent enabled disease transmission control in several places, such as Israel, Singapore and South Korea. Goal of these intervention strategies is to slowdown the disease transmission, reduce mortality rate and ameliorate the burden and strain on healthcare systems.\\

\noindent Since the seminal work \cite{SIR:1927} of Kermack-McKendric in 1927 on SIR epidemic model \cite{ross:1911}, several researchers have used adapted or modified versions of the basic SIR epidemiological model for modelling evolution of epidemics. Various versions of the basic models are essentially systems of first order ODEs with an incidence function and dependency on constant parameters describing the nature of the infectious disease such as rate of transmission on contact, rate of recovery, mortality, incubation period etc. In practice these parameters vary during the course of the epidemic and also across geographical regions and local population. Another drawback of SIR type models involving intervention strategies is eventually almost the whole susceptible population become infected in a short span of time due to the exponential rate of transmission. Intervention strategies applied on SIR models only delay the eventual infection of almost entire population.\\ 

\noindent Generally, epidemics with latency periods \cite{ Julian:2020, AMS:2008, Erlang:2018} are modeled by compartmental epidemiological models of the type Susceptible-Exposed-Infected-Removed (SEIR). Mathematics of epidemic models like SIR and SEIR, and its variants can be seen in \cite{SIAMreview:2000}, along with different threshold numbers.  Let $N$ denote the size of a population, $S(t)$ the number of susceptible individuals at time $t$, $E(t)$ the number of exposed individuals at time $t$, $I(t)$ the number of infected individuals at time $t$, and $R(t)$ the number of recovered and death cases at time $t$. Then evolution of the epidemic is governed by the system of first order ODEs
\begin{eqnarray}
&&\dot{S}(t)=\mu N -\beta SI/N -\mu S, \nonumber \\
&&\dot{E}(t)=\beta SI/N -(\mu+\alpha) E, \nonumber \\
&&\dot{I}(t)= \alpha E-(\mu+\gamma) I, \nonumber \\
&&\dot{R}(t)=\gamma I -\mu R. \nonumber
\end{eqnarray}
Here $\mu$ denote constant rate of new recruitment to the susceptible population as well as natural death rate in each compartment. Parameter $\beta$ measures potential force of infection on contact, whereas parameters $\alpha$ and $\gamma$ denote rate of transition from compartments $E$ and $I$ to $I$ and $R$, respectively. One may note that $\dot{S}(t)+\dot{E}(t)+\dot{I}(t)+\dot{R}(t)=0$.
Therefore, $S(t)+E(t)+I(t)+R(t)=N$ is constant for all time $t\geq 0$ and the solution space is
$\{(S,E,I,R)\in \mathbb{R}^4_+:S+E+I+R=N\}$.
The model is extended here along with the contact network to perform a data based analysis for the COVID-19 spread and its control in India.\\

\noindent \emph{Main Contributions }: The focus of our work is to understand the different disease control target interventions and accordingly predict the disease spread possibilities by estimating the model parameters from the available data. In this attempt our main contributions are listed below.
\begin{enumerate}
    \item We have used two epidemic models -- SEAIRD and control based SEAIRD -- to model the evolution of COVID-19 in India. The second model fine tunes the first model towards the twin goals of capturing the real situation and making credible suggestions to policy makers involved in the pandemic management and mitigation.
    
     \item The disease free and endemic stability of the SEAIRD model has been established.
    
    \item Presence of asymptomatic infections and their role in largely unbridled and quick disease transmission is computationally established.
    
    \item The control based SEAIRD model has two very interesting features. One being a reverse flow from quarantine class (Q) to susceptible class (S) and another one modeling increased awareness for social distancing practices among the population. These features model the effectiveness of the general and individual level preventive measures quarantine and social distancing.
    
    \item The control based SEAIRD model parameters are traced back by matching the real COVID-19 data of India for the first 93 days. Some model based predictions are presented to understand the possibility of second wave, its size and time of arrival under different lockdown policies. 
    
    \item Three lockdown policies are considered. One of them (LD-I) attempt to model different lockdown phases (complete lockdown, zone-wise lockdown) in India. Additionally, two staggered policies are designed as suggestions to policy makers for further bringing down the infection levels and mitigate the adverse socio-economic impact.
    
\end{enumerate}

\noindent Before presenting the technical details, in  Section \ref{sec:02}, we enlist some relevant definitions and theorems useful in describing mathematical models of epidemics. A standard model suitable for modeling COVID-19 pandemic named as SEAIRD is proposed in Section \ref{sec:03}, incorporating the age and contact structures to track the COVID-19 evolution in India. In this model, the removed compartment (R) is further partitioned into two compartments, namely, recovered class (R) and death class (D). The infective class of population is subdivided into two classes, namely, asymptomatic (A) and symptomatic (I). Presence of asymptomatic infections and their role in largely unbridled and quick disease transmission is computationally established. There are some theoretical support for the new SEAIRD model discussed in Subsection \ref{sec:03:stab} for establishing the disease free and endemic equilibrium points and their stability. We study the disease transmission under three different lockdown scenarios namely, no lockdown, strict lockdown for prolonged time period, and staggered lockdown phases of varying degrees of implementation.\\

\noindent An improvised SEAIRD model where the scope of the pandemic model is broadened to incorporate quarantine (Q) and hospitalization (H) measures is considered in Section \ref{sec:04} to further control the disease transmission. The model parameters are traced back by comparing with the real COVID-19 data in India. In Section \ref{sec:05}, multiple lockdown policies and social distancing strategies are experimented computationally, which suggest the possibility of second wave of infections and possible optimal control strategy in next two years time frame. This analysis is followed by limitations of the model, future scope and challenges discussion in Subsection \ref{sec:05:challenge}. Finally, Section \ref{sec:06}, presents the conclusion of this work.


\section{Epidemic Model : Basic Definitions \& Concepts}\label{sec:02}
The general pandemic model considers a heterogeneous population that can be grouped into $n$ homogeneous compartments. Further, the compartments may be sorted such that the first $m$ compartments consists of all the infected cases. Let $x=(x_1, \cdots ,x_n)^T$ represent a general state of the model where, $x_i \geq 0$ denote the number of individuals in each compartment. Thus, all states of the model are restricted to the closed positive cone $x \in X = \mathbb{R}^+$. Further, let
$$X_s = \{ x\geq 0\; | \;x_i = 0, i = 1, 2 \cdots, m \}$$
denote the set of disease free states. The system of differential equations modeling disease transmission in $X$ is of the form
\begin{equation}\label{dyn}
   \dot{x} = f(x),~~f=(f_1,f_2,\cdots,f_n) 
\end{equation}
where components of $f$ are of the form $f_i(x) = F_i(x) - W_i(x), i = 1, 2, \cdots, n$. Here $F_i(x)$ represents the rate of appearance of new infections in compartment $i$ by all other means and $$W_i(x) = W_i^{-}(x) - W_i^{+}(x),$$ where $W_i^{-}(x)$ is the rate of transfer of individuals out of the $i$th compartment and $W_i^{+}(x)$ is the rate of transfer of individuals into compartment $i$ by all other means. It is assumed that each function is at least twice continuously differentiable in each variable. Furthermore, it is required that $F \geq 0$ and $W$ is an $M$-matrix, so that, $W^{-1} \geq 0$. The next-generation matrix \cite{NGM:2008} is computed corresponding to the $m$ infected compartments and defined to be $K = FW^{-1} \geq 0$. The next-generation matrix is used in computing the reproduction number.

\begin{definition}\textnormal{(\cite{Rzero:1990})}\label{defR0}\\
In epidemiology, we take basic reproduction number/ratio,
$R_0$, as the average number of individuals infected by a
single infected individual during his or her entire
infectious period, in a population which is entirely
susceptible.
\end{definition}
\noindent The reproduction number can be computed via different methods. We have taken the next-generation matrix approach to compute the reproduction number as spectral radius of the next-generation matrix $K$, that is, $$R_0 = \rho(FW^{-1}).$$

\begin{definition}\textnormal{(\cite{Chaos:2000})}\label{def:stab}\\
A point $x^*$ is said to be an \emph{equilibrium point} of $f$, if $f(x^*) = 0$. An equilibrium point $x^*$ is \emph{stable}, if every initial point $x_0$ which is close to $x^*$ has the property: the solution $F(t, x_0)$ remain close to $x^*$ for all $t \geq 0$. An equilibrium point $x^*$ is called \emph{asymptotically stable} if it is both stable and attracting. The point $x^*$ is unstable if it is not stable. An equilibrium point $x^*$ is called \emph{globally   asymptotically stable} if it is asymptotically stable and for all initial values converge to the equilibrium point.
\end{definition}
\noindent Criteria for stability of a system of linear/linearized differential equation is given by
\begin{theorem}\textnormal{(Theorem 7.2, \cite{Chaos:2000})}\label{th:stab}\\
Let $\displaystyle{J(x^*)}$ be the $n \times n$ Jacobian matrix, corresponding to the system of equations $\dot{x} = f(x)$. If the real parts of all eigenvalues of $\displaystyle{J(x^*)}$ are negative, then the equilibrium point $x^*$ is globally asymptotically stable. If $\displaystyle{J(x^*)}$ has $n$ distinct eigenvalues and if the real parts of all eigenvalues of $\displaystyle{J(x^*)}$ are non-positive, then $x^*$ is stable.
\end{theorem}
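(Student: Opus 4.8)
The plan is to reduce the question to the constant-coefficient linear system obtained by linearizing $\dot x = f(x)$ at $x^*$, and then transfer the conclusion back to the nonlinear flow by a quadratic Lyapunov function. Set $y = x - x^*$ and $A = J(x^*)$, so that $\dot y = A y + g(y)$ where $g(y) = f(x^*+y) - Ay$ satisfies $g(0)=0$ and $\|g(y)\| = o(\|y\|)$ as $y\to 0$, using the assumed twice continuous differentiability of $f$. The solution of the truncated linear system $\dot y = Ay$ is $y(t) = e^{At}y(0)$, so everything hinges on decay or boundedness estimates for $e^{At}$ expressed through the spectrum of $A$, which we extract from the Jordan form.

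\emph{First part.} Suppose every eigenvalue of $A$ has strictly negative real part, say $\max_i \mathrm{Re}\,\lambda_i = -2\delta < 0$. From the Jordan form the polynomial factors are dominated by the exponential, so $\|e^{At}\| \le C e^{-\delta t}$ for all $t\ge 0$; equivalently, the Lyapunov equation $A^{T}P + PA = -I$ has the symmetric positive definite solution $P = \int_0^\infty e^{A^{T}t} e^{At}\,dt$. Take $V(y) = y^{T}Py$. Along trajectories of the nonlinear system, $\dot V = -\|y\|^2 + 2\,y^{T}P\,g(y)$, and since $\|g(y)\| \le \tfrac{1}{4\|P\|}\|y\|$ for $\|y\|$ small, we get $\dot V \le -\tfrac12\|y\|^2 < 0$ on a punctured ball $B_r\setminus\{0\}$. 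Hence $V$ is a strict Lyapunov function, $x^*$ is asymptotically stable, and every trajectory entering $B_r$ converges to $x^*$; under the standing structural assumptions on the epidemic models (the state is confined to a compact invariant region in which all orbits eventually enter $B_r$) this upgrades to the global statement as quoted.

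\emph{Second part.} Now assume $A$ has $n$ distinct eigenvalues with $\mathrm{Re}\,\lambda_i \le 0$ for all $i$. Distinctness gives diagonalizability, $A = TDT^{-1}$ with $D = \mathrm{diag}(\lambda_1,\dots,\lambda_n)$, so $e^{At} = T e^{Dt} T^{-1}$ and $\|e^{At}\| \le \|T\|\,\|T^{-1}\|\,\max_i |e^{\lambda_i t}| = \|T\|\,\|T^{-1}\|$ for all $t\ge 0$. Thus solutions of the linearization starting close to the origin remain uniformly close, i.e. the origin is Lyapunov stable though not necessarily attracting; concretely, $V(y) = |T^{-1}y|^2$ gives $\dot V = 2\sum_i \mathrm{Re}\,\lambda_i\,|z_i|^2 \le 0$ with $z = T^{-1}y$, a non-strict Lyapunov function certifying stability of $x^*$.

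\emph{Main obstacle.} The delicate point is exactly the passage from the linear to the nonlinear system: a naive Gr\"onwall bound on $\dot y = Ay + g(y)$ only controls growth on finite time intervals and cannot by itself exclude slow escape. Encoding the spectral information in a quadratic Lyapunov function --- through the Lyapunov equation in the strictly stable case and through the eigenbasis in the marginal case --- is what circumvents this, and it is the step to carry out with care; the estimates on the remainder $g$ are then routine. I would also flag that ``globally'' in the first assertion is justified only under the extra invariance/compactness hypotheses satisfied by the models at hand, not for an arbitrary $f$.
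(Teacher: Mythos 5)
First, note that the paper does not prove this statement at all: Theorem \ref{th:stab} is quoted as background from the textbook \cite{Chaos:2000}, where it is a theorem about the \emph{linear} system $\dot y = Ay$ (there, ``globally asymptotically stable'' and the marginal stability criterion are both statements about the linear flow $e^{At}$). So there is no paper proof to compare against; your proposal has to be judged on whether it establishes the statement as quoted, i.e.\ for the nonlinear equilibrium $x^*$.

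Your first part is the standard Lyapunov-equation argument and correctly yields \emph{local} asymptotic stability of $x^*$; you rightly flag that ``globally'' cannot follow from spectral data alone. But the patch you offer (a compact invariant region in which ``all orbits eventually enter $B_r$'') is itself an unproved and generally false hypothesis --- in these very models the region $\Omega_i$ can contain both the disease-free and the endemic equilibrium, so orbits need not enter $B_r$. The genuine gap, however, is in your second part: the identity $\dot V = 2\sum_i \mathrm{Re}\,\lambda_i\,|z_i|^2$ is computed along the \emph{linearization} only; for the nonlinear system there is an extra term of the form $2\,\mathrm{Re}\,\langle z, T^{-1}g(y)\rangle$, and when some $\mathrm{Re}\,\lambda_i=0$ this term is not dominated by anything negative, so $V$ is not a Lyapunov function for $x^*$ and the conclusion does not transfer. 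Indeed no argument can close this gap, because the assertion is false for nonlinear systems: for $\dot x=-y+x(x^2+y^2)$, $\dot y=x+y(x^2+y^2)$ the Jacobian at the origin has distinct eigenvalues $\pm i$ with zero real part, yet $\dot r=r^3$ shows the origin is unstable (in one dimension, $\dot x=x^3$ does the same job). The second assertion is therefore only valid for the linear system $\dot y = J(x^*)y$, which is how the cited source states it; your proof silently drops the remainder $g$ exactly at the point where it matters.
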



\subsection{Social contact matrix}\label{sec:02:contact}
We intend to mathematically model the spread of COVID-19 in India under the influence of various intervention strategies and control measures. In order to account for the heterogeneous and local in nature contact patterns among various age groups, we consider a India specific social contact matrix of $16$ age groups based on the Demographic and Health Survey (DHS) data. Due to significant presence of asymptomatic individuals, there is a very high potential of rapid disease transmission. In view of mild or no symptoms, the asymptomatic population would follow normal social contact pattern. We incorporate the social contact matrix in our models for more realistic modelling of the disease transmission and the impact of other control strategies. \\

\noindent Research using social-contact networks has shown its efficiency in measuring the transmission scale, and analysing the relative merits and effectiveness of several proposed mitigation and intervention strategies \cite{nature:2004}. Early epidemiological models were based on
population-wise random-mixing, but in practice, each individual has a finite set of contacts to whom they can pass infection; whereas, the essential service providers are prone to higher degrees of contact out of which some of them may turn out to be super-spreaders \cite{superspread:2005, superspread:2020}. Knowledge of network structures allow models to estimate the epidemic dynamics at the population scale rather than individual level. Several methods that allow mixing of network or network approximation are reviewed in \cite{keeling:2005}. Detailed mathematical models characterizing early epidemic growth patterns incorporating inhomogeneous mixing of population networks are reviewed in \cite{chowell:2016}. The heterogeneous social contact networks are more likely to result in epidemic spreading than their homogeneous counterparts, thus having a major role in determining whether an infection would turn out to be an epidemic or persist at endemic levels \cite{heterogeneous:2020, PLoS:2011, sci_report:2015}. Therefore, epidemic models with interventions towards the goal of successfully preventing an outbreak need to account for social structure and mixing patterns. Contact patterns vary across age and locations (e.g. home, work, and school), therefore, integrating them with the transmission dynamics models of pathogens significantly improves the models? realism.\\

\noindent Let $C(t)$ denotes the contact matrix of $M$ individual age-groups at a certain time $t$. The $i\,j\mbox{-th}$ entry of the matrix $C(t)$ represent the number of contacts of an individual in age-group $i$ with another individual of age-group $j$. By the reciprocity relation, the number of contacts must satisfy 
    $$C_{i\,j} N_i = C_{j\,i} N_j, ~\mbox{where}~ 1\leq i,\,j \leq M.$$
The contact matrix $C$ consists of four component contact matrices, namely, workplace ($C^W$), home ($C^H$), school ($C^S$) and others ($C^O$). Thus, $C = C^W + C^H + C^S + C^O$. The contact structure data for India and its subdivisions are obtained from \cite{contact, IMSC:2020, cij:2012}.


\subsection{COVID-19 data source}\label{sec:02:data}
The growth in SARS-CoV-2 infection is recorded for the available Indian data set. The data may be influenced by different unavoidable constraints like variation in testing strategy and facilities of different states, nonuniform policies of state governments and public awareness levels. But our model parameters are basically data driven, and intended to predict and compute the total number of infections and death cases. The data available at \cite{covid19} are most reliable in terms of recording daily COVID-19 cases in India. This data has been used in forecasting future transmission scenario; determining the rate and extent of infection spread; and determining the longevity and level of lockdown measures along with the social distancing norms. Further, the available data with the mathematical model has been considered to understand the possibility of second wave of COVID-19 infection and its size.


\section{SEAIRD Epidemic Model} \label{sec:03}

Mathematical models have become important tools in analyzing the spread and control of infectious diseases. Furthermore, mathematical models have been used in comparing, planning, implementing, evaluating, and optimizing various detection, prevention, therapy, and control programs. In this section, we consider an SEIR type model for the COVID-19 disease transmission with two additional compartments of population. One of the compartments consist of asymptomatic individuals while the other one represent the number of death cases at any instant of time. The entire population size $N(t) = N$, is divided into six distinct epidemiological compartments of individuals, namely, susceptible, exposed, asymptomatic, symptomatic, recovered from disease and died due to the disease at any instant of time and their sizes are denoted by $S(t)$, $E(t)$, $A(t)$, $I(t)$, $R(t)$ and $D(t)$, respectively. We assume that individuals enter the population by birth or immigration with a constant recruitment as susceptible (S) and exit by death or by infection as sub-case. We incorporate the social contact matrix in the SEAIRD model via the incidence function to account for contact patterns among various age groups of the population under consideration. The modified incidence functions are given by
$$\lambda_i(t) = \beta \sum_{j=1}^M \bigg(C_{i\,j}^a\dfrac{A_j(t)}{N_j} +  C_{i\,j}^s\dfrac{I_j(t)}{N_j} \bigg),~\mbox{where}~i=1,2,\cdots,M.$$
The modified SEAIRD model equations are
\begin{equation}\label{sys1}
\begin{aligned}
    \dot{S}_i(t) &= \Lambda - \lambda_i(t) S_i(t) - d_0 S_i(t),\\
     \dot{E}_i(t) & = \lambda_i(t) S_i(t) - (\chi + d_1) E_i(t),\\
     \dot{A}_i(t) &= \alpha  \chi E_i(t) -(\gamma_{ar}+\gamma_{as} + d_2) A_i(t),\\
    \dot{I}_i(t) &= (1-\alpha)  \chi E_i(t) + \gamma_{as} A_i(t) -(\gamma_{sr} + \eta + d_3) I_i(t),\\
    \dot{R}_i(t) &= \gamma_{ar} A_i(t)+\gamma_{sr}I_i(t) -d_4 R_i(t),\\
    \dot{D}_i(t) &=\eta I_i(t),
\end{aligned}
\end{equation}
subject to the following initial conditions at time $t=0$:
\begin{align}\label{initial}
    S_i = S_i^0 \geq 0, E_i = E_i ^0 \geq 0, A_i = A_i^0 \geq 0, I_i = I_i^0 \geq 0, R_i = R_i^0 \geq 0, D_i = D_i^0 \geq 0.
\end{align}
The model is schematically depicted by the transmission diagram in Fig. \ref{fig:SEAIRD}, where the incidence due to contact between infected individuals and susceptible population introduces new members in the exposed class with $\beta$ as the transmission rate on contact. Parameter $\Lambda$ denotes the rate of recruitment (birth, immigration) of new members to the susceptible population, whereas $d_j ~ (j=0,1,2,3,4)$ are the natural death rates in the population compartments $S_i, E_i, A_i, I_i, R_i$, respectively. The rates at which the exposed class depletes into $A$ and $I$ classes are $\alpha \chi$ and $(1 - \alpha) \chi$, respectively. Further, the asymptomatic class $A$ transition to symptomatic class $I$ at rate $\gamma_{as}$ and to the recovered class $R$ at rate $\gamma_{ar}$. The symptomatic infectious population either recovers at rate $\gamma_{sr}$ or eventually meet death due to the disease at rate $\eta$. All parameters in the model are assumed to be positive. \\
\tikzstyle{level 1}=[level distance=30mm, sibling distance=30mm]
\tikzstyle{level 2}=[level distance=20mm, sibling distance=20mm]
\tikzstyle{level 3}=[level distance=20mm]

\begin{figure}
\centering
\begin{tikzpicture}[grow=right,->,>=angle 60]
  \node[draw] {$S$}
   	child {node[draw] {$E$}
	child {	child {node[draw]{$I$} }
			child{node[draw]{$A$}} } };
\draw[<-](7.0,-0.75) -- (7.0,0.75);
\draw[->](3,-0.25)--(3, -1);
\draw[->](0,1)--(0, 0.25);
\draw[<-](0,-1)--(0, -0.25);
\draw[->](9.9,0)--(9, 0);
\draw[->](7.3,0.7)--(8.3, 0.3);
\draw[->](7.25,-0.7)--(8.3, -0.3);
\node[draw] at (12.2,0.0) {$D$};
\node[draw] at (10.2,0.0) {$R$};
\draw [->] (7.0,-1.30) to [out=-35,in=-100] (12.2,-0.25);   	
\draw[->](7.3,1.0) to [out=5,in=100] (10.2,0.25);		
\draw[->](7.3,-1.0)  to [out=-5,in=-100] (10.2,-0.25);
\node at (6.7,-0.0) {$\gamma_{as}$};
\node at (9.0,1.3) {$\gamma_{ar}$};
\node at (9.0,-1.3) {$\gamma_{sr}$};
\node at (11.8,-1.4) {$\eta$};
\node at (3.3,-0.75) {$d_1$};
\node at (0,1.3) {$\Lambda$};
\node at (0.3,-0.75) {$d_0$};
\node at (8,0.7) {$d_2$};
\node at (8,-0.7) {$d_3$};
\node at (9.5,0.3) {$d_4$};
\node at (1.4,0.5)[rotate=0] (N) {$\lambda(t)$};
\node at (5.6,0.55)[rotate=27] (N) {$ \alpha\chi$};
\node at (5.7,-0.6)[rotate=-25] (N) {$(1-\alpha)\chi$};
\end{tikzpicture}
\vspace{-0.2cm}
\caption{Disease transmission diagram (SEAIRD Model)} \label{fig:SEAIRD}
\end{figure}
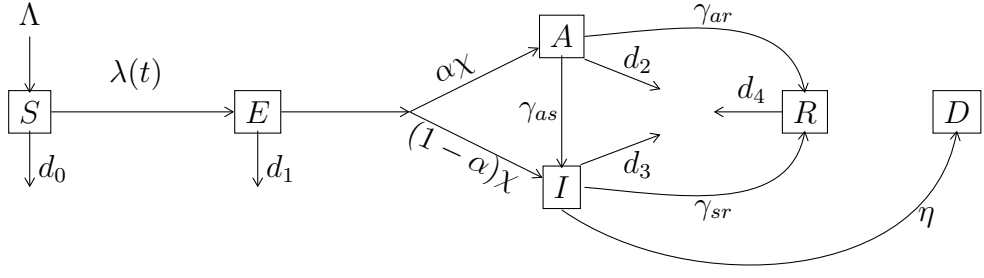
\noindent \emph{Feasible region for the system : } In small time frames, a population may be assumed to be free from demographic changes (birth, death, aging). Without demography our model imply constant population in each age group. That is, $N_i = S_i(t) + E_i(t) + A_i(t) + I_i(t) + R_i(t) + D_i(t)$ is constant for all $t \geq 0$, and $1\leq i \leq M$.  If we assume demography, the model is no longer conservative. Letting $\hat{N}_i(t) = N_i(t) - D_i(t)$, it is easy to see that
$\frac{d \hat{N}_i}{dt} \leq \Lambda - d \hat{N}_i$,
where $d = \mbox{min} \big\lbrace d_0, d_1, d_2, \eta+d_3, d_4  \big\rbrace$. Similarly, from (\ref{sys1}), we obtain
$\displaystyle{\frac{d S_i}{dt}} \leq \Lambda - d_0 S_i.$
Then,
\begin{align*}
    \lim_{t\to\infty} \mbox{sup}\, \hat{N}_i(t) \leq \frac{\Lambda}{d}, ~~ \lim_{t\to\infty} \mbox{sup}\, S_i(t) \leq \frac{\Lambda}{d_0}.
\end{align*}
Hence, feasible region $\Omega_i$ for age-group $i$ may be chosen as the closed set
\begin{align*}
    \Omega_i = \Big \lbrace (S_i, E_i, A_i, I_i, R_i)\in \mathbb{R}^5_+ ~\lvert\, 0\leq S_i \leq \frac{\Lambda}{d_0},\, 0\leq S_i+E_i+A_i+I_i+R_i \leq \frac{\Lambda}{d} \Big\rbrace.
\end{align*}


\subsection{Basic reproduction number and stability}\label{sec:03:stab}
The basic reproduction number or basic reproductive ratio, $R_0$, is defined as the average number of secondary cases generated by a single infectious person in a completely naive population. Here, we introduce the \emph{next-generation-matrix} approach for finding $R_0$ of SEAIRD model. From now on we remove the subscript index $i$ while considering only one age-group. By linearising the dynamics about the disease free equilibrium point $D^{free}_0 = (N,0,0,0,0,0)$, we obtain the transmission matrix, $F$, and the transition matrix, $W$ as follows
$$F=\begin{bmatrix}
0&\beta &\beta \\
0&0&0\\
0&0&0
\end{bmatrix},~~\mbox{ and}~~ W=
\begin{bmatrix}
\chi + d_1 & 0&0\\
-\alpha \chi&\gamma_{ar} + \gamma_{as} + d_2&0\\
-(1-\alpha)\chi &-\gamma_{as}& \gamma_{sr}+\eta + d_2
\end{bmatrix}.$$

$$ FW^{-1} =
\begin{bmatrix}
a_{11} & a_{12} & a_{13}\\
0&0&0\\
0&0&0
\end{bmatrix},$$
where 
\begin{align*}
    a_{11} &= \frac{\beta \alpha \chi(\gamma_{sr}+\eta+d_3)
+ \beta (\gamma_{ar}+d_2)(1-\alpha)\chi+\beta\gamma_{as}\chi}{(\chi+d_1) (\gamma_{ar}+\gamma_{as}+d_2)(\gamma_{sr}+\eta+d_3)},\\
a_{12} &= \frac{\beta (\chi+d_1)(\gamma_{sr}+\eta+d_3)+\beta(\chi+d_1)\gamma_{as}}{(\chi+d_1) (\gamma_{ar}+\gamma_{as}+d_2)(\gamma_{sr}+\eta+d_3)},\\
a_{13} &= \frac{\beta (\chi+d_1)(\gamma_{ar}+\gamma_{as}+d_2) }{(\chi+d_1) (\gamma_{ar}+\gamma_{as}+d_2)(\gamma_{sr}+\eta+d_3)}.
\end{align*}
$R_0$ is the maximum of the absolute eigenvalues of the next generation matrix $FW^{-1}$. Therefore,

\begin{align}\label{R0neq}
    R_0 = \frac{\beta \alpha \chi(\gamma_{sr}+\eta+d_3)
+ \beta (\gamma_{ar}+d_2)(1-\alpha)\chi+\beta\gamma_{as}\chi}{(\chi+d_1) (\gamma_{ar}+\gamma_{as}+d_2)(\gamma_{sr}+\eta+d_3)}.
\end{align}
If $\Lambda = 0, \,d_j = 0~ (j=0,1,2,3,4)$, then
\begin{eqnarray}\label{r0}
R_0 = \displaystyle{\frac{\beta \alpha (\gamma_{sr}+\eta)+ \beta \gamma_{ar}(1-\alpha)+\beta\gamma_{as}}{(\gamma_{ar}+\gamma_{as})(\gamma_{sr}+\eta)} = \dfrac{\beta \alpha}{(\gamma_{ar}+\gamma_{as})} + \dfrac{\beta \gamma_{ar}(1-\alpha)+\beta\gamma_{as}}{(\gamma_{ar}+\gamma_{as})(\gamma_{sr}+\eta)}}.
\end{eqnarray}
\noindent Showing local or the global stability for nonlinear dynamical system such as SIR and SIS type epidemic model were never easy for the researchers. People have tried multiple approach and several years to show the stability of these models and prove them with minimal conditions. The higher dimensional dynamical systems like SEIR and SEIS were known to be globally stable at the disease free equilibrium (DFE) subject to $R_0\leq 1$. Global stability of the endemic equilibrium for these systems were conjectured when $R_0>1$ but remained open for a long time. This conjecture was solved by Li and Muldowney \cite{Myli:1995} in 1995. To establish this, they have used  Poincar\'e Bendixson criterion in three dimensions. Following this, the global stability properties of SEIRS type  has been improved by Fan, Li, Driessche, Wan \cite{Li:1999, fan:2001}. Thereafter, Korobeinikov and Maini \cite{ Kor:2004, Maini:2004} studied the global stability of SEIR and SEIS type models using Lyapunov functions. In this paper, we discuss local stability results for DFE and endemic equilibrium. Recent contributions \cite{shu:2011, supa:2018, liu, side:2016} may be referred for local stability results on epidemic models. The stability analysis will support our mathematical model and the subsequent computational investigations. We now discuss local stability of the disease free equilibrium for the SEAIRD model (\ref{sys1}). For this purpose, we need to drop the demographic parameters (natural birth and death) in the model. 
\begin{theorem}\label{th1}
The disease free equilibrium point $D^{free}_0$ is locally  stable if $R_0<1$.
\end{theorem}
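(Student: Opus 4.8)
The plan is to linearize the system \eqref{sys1} (with the demographic parameters $\Lambda$ and $d_j$ dropped, as stipulated) about the disease free equilibrium $D^{free}_0 = (N,0,0,0,0,0)$ and apply Theorem \ref{th:stab}. First I would compute the Jacobian $J(D^{free}_0)$ of the six-dimensional vector field. Because at the DFE we have $S=N$, $A=I=0$, the incidence term $\lambda(t)S = \beta S\big(\tfrac{A}{N}+\tfrac{I}{N}\big)$ linearizes to $\beta(A+I)$, and the $S$ and $R$ (and $D$) equations contribute only to the ``trivial'' block. The key structural observation is that $J(D^{free}_0)$ is block triangular: the equations for $S$, $R$, $D$ decouple from the infected subsystem $(E,A,I)$ in the linearization, so the spectrum of $J(D^{free}_0)$ is the union of the spectrum of the $3\times 3$ infected block, call it $J_1$, together with the eigenvalues coming from the $S$, $R$, $D$ rows. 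Since $\Lambda=d_j=0$, the $\dot S$ row linearizes to $\dot s = 0$ and the $\dot D$ row to $\dot d = \eta\, i$, so these contribute eigenvalue $0$; the $\dot R$ row likewise contributes $0$ once demography is removed. Hence the only eigenvalues that can have positive real part are those of $J_1$.

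Next I would analyze $J_1$, the Jacobian of the $(E,A,I)$ subsystem, which (with demography dropped) is
\begin{align*}
J_1 = \begin{bmatrix}
-\chi & \beta & \beta\\[2pt]
\alpha\chi & -(\gamma_{ar}+\gamma_{as}) & 0\\[2pt]
(1-\alpha)\chi & \gamma_{as} & -(\gamma_{sr}+\eta)
\end{bmatrix}.
\end{align*}
Writing $J_1 = F - W$ with $F$ and $W$ exactly the transmission and transition matrices from the $R_0$ computation in Subsection \ref{sec:03:stab}, the standard next-generation argument applies: $W$ is a nonsingular $M$-matrix (its off-diagonal entries are $\le 0$ and it has positive diagonal, with $W^{-1}\ge 0$), $F\ge 0$, and a well-known equivalence (see \cite{NGM:2008}) states that all eigenvalues of $F-W$ have negative real part if and only if $\rho(FW^{-1})<1$. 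Since $\rho(FW^{-1}) = R_0$ by \eqref{R0neq} (here \eqref{r0} in the demography-free case), the hypothesis $R_0<1$ gives exactly that every eigenvalue of $J_1$ has negative real part. Combined with the previous paragraph, $J(D^{free}_0)$ has no eigenvalue with positive real part, and the eigenvalues on the imaginary axis are the simple zeros coming from the conserved/decoupled $S$, $R$, $D$ directions.

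The main obstacle is the appearance of the zero eigenvalues: Theorem \ref{th:stab} in the form quoted gives \emph{asymptotic} stability only when all eigenvalues have strictly negative real part, and it gives mere stability only under a distinctness hypothesis on the eigenvalues that need not hold here. So to conclude \emph{stability} I would argue on the reduced system: the hyperplanes $S = N$, and the relation $S+E+A+I+R = N$ (conservation in the demography-free model) are invariant, so the dynamics restricted to the center directions is trivial (those coordinates are constants of motion or are slaved to the decaying infected block, since $D$ only integrates $\eta I\to 0$). Restricting to the invariant manifold on which the infected subsystem lives, the relevant linearization is $J_1$, which is a stable (Hurwitz) matrix under $R_0<1$; hence the DFE is stable within the feasible region $\Omega$. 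Alternatively, one may invoke the center manifold reduction: the center subspace is tangent to the $S,R,D$ directions on which the flow is constant, so no instability arises from it, and local stability of the full DFE follows. I would present the block-triangular spectral computation as the core of the proof and handle the zero-eigenvalue bookkeeping via the conservation law, which is the only genuinely delicate point.
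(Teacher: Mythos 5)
Your argument is correct, but it takes a genuinely different route from the paper's. The paper works with the full $6\times 6$ Jacobian, observes that its rank is $3$ so the nonzero eigenvalues are roots of the cubic (\ref{char1}), and then verifies the Routh--Hurwitz conditions $a_1>0$, $a_3>0$, $a_1a_2>a_3$ by explicit computation, feeding in $R_0<1$ through the factorization $a_3=\chi(\gamma_{ar}+\gamma_{as})(\gamma_{sr}+\eta)(1-R_0)$ and the inequalities (\ref{inq1}). You instead isolate the infected block $J_1$ for $(E,A,I)$, write it as $F-W$ with exactly the transmission and transition matrices used to compute $R_0$, and invoke the standard next-generation equivalence ($F\ge 0$, $W$ a nonsingular M-matrix, hence $F-W$ Hurwitz if and only if $\rho(FW^{-1})=R_0<1$), as in \cite{NGM:2008}. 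This buys a shorter, less error-prone argument (no explicit cubic coefficients, no verification of $a_1a_2>a_3$) and one that extends verbatim to the age-structured, multi-group version, while the paper's computation is more elementary and self-contained. You are also more careful than the paper about the triple zero eigenvalue: Theorem \ref{th:stab} gives mere stability only under a distinctness hypothesis that fails here, and the paper tacitly treats non-positive real parts as sufficient; your center-direction/conservation discussion is aimed at precisely this gap. Two small slips in that sketch do not affect the conclusion: the set $S=N$ is not invariant for the nonlinear flow (since $\dot S<0$ whenever $A+I>0$), and the conserved quantity in the demography-free model is $S+E+A+I+R+D$ rather than $S+E+A+I+R$; the decoupled $S$, $R$, $D$ directions still carry no instability, e.g.\ by a center-manifold reduction or by noting that $R$ and $D$ merely integrate the exponentially decaying infected components.
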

\begin{proof}
To prove local stability of the disease free equilibrium, we need to establish the stability of the system linearised about the equilibrium point. Therefore, it is enough to show that the eigenvalues of the corresponding Jacobian have only non-positive real parts. The Jacobian of the system (\ref{sys1}) at $D^{free}_0$ is given by
$$J(D^{free}_0)=\begin{bmatrix}
0&~~0&-\beta &-\beta &0~&0\\
0&-\chi&~~\beta &~~\beta &0~&0\\
0&~~\alpha \chi&-(\gamma_{ar}+\gamma_{as)}&~~0& 0~&0\\
0&(1-\alpha)\chi&\gamma_{as}&-(\gamma_{sr}+\eta)&0~&0\\
0&~~0&~~\gamma_{ar}&~~\gamma_{sr}&0~&0\\
0&~~0&~~0&~~\eta &0~&0
\end{bmatrix}.$$
Since the rank of the Jacobian matrix is $3$, its characteristic equation has three zero roots. The non-zero roots, also three in number, are obtained as roots of the cubic equation 
\begin{equation}\label{char1}
    \begin{aligned}
    y^3 + a_1 y^2 + a_2 y + a_3 = 0,
    \end{aligned}
\end{equation}
whose coefficients $a_1$, $a_2$, $a_3$ are as follows:
\begin{align}\label{a1}
    a_1 &= \gamma_{ar}+\gamma_{as}+\chi +\gamma_{sr}+\eta \\ \label{a2}
    a_2 &= \chi(\gamma_{ar}+\gamma_{as})+(\eta+\gamma_{sr})(\gamma_{ar}+\gamma_{as}+\chi)-\beta \chi,\\
    a_3 &= \chi(\gamma_{ar}+\gamma_{as})(\gamma_{sr}+\eta)-\beta \gamma_{ar}(1-\alpha)\chi-\beta \gamma_{as}\chi-\beta\alpha\chi(\gamma_{sr}+\eta).  \label{a3}
\end{align}
By Routh-Hurwitz Criterion ( see, \cite{Book:Routh}:1.6-6),  the  roots  of  the  equation (\ref{char1})  have  negative  real  parts  if  and  only  if $a_1>0,~a_3>0,~\mbox{and}~a_1 a_2 > a_3$. From relation (\ref{a1}) it is obvious that $a_1>0$. Relation (\ref{a3}) may be rewritten as 
\begin{align*}
   a_3 &=\chi (\gamma_{ar}+\gamma_{as})(\gamma_{sr}+\eta)\Bigg( 1 - \frac{\beta \alpha (\gamma_{sr}+\eta)+ \beta \gamma_{ar}(1-\alpha)+\beta\gamma_{as}}{(\gamma_{ar}+\gamma_{as})(\gamma_{sr}+\eta)} \Bigg)\\
    &=\chi (\gamma_{ar}+\gamma_{as})(\gamma_{sr}+\eta)(1-R_0) >0,~\mbox{if}~ R_0<1.
\end{align*}
The last condition $a_1 a_2 > a_3$ may be easily verified in some simple steps. Multiplying (\ref{a1}) and (\ref{a2}), we obtain
\begin{align*}
        a_1 a_2 = (\gamma_{ar}+\gamma_{as})(\gamma_{sr}+\eta)(\gamma_{ar}+&\gamma_{as}+\gamma_{sr}+\eta) + a_1\cdot (\gamma_{ar}+\gamma_{as}+\gamma_{sr}+\eta-\beta)\\
    +&a_3+\beta \gamma_{ar}(1-\alpha)\chi+\beta \gamma_{as}\chi+\beta\alpha\chi(\gamma_{sr}+\eta).
\end{align*}
In view of the above relation, we now need to show that the second term is positive. From (\ref{r0}) and the condition $R_0<1$, we obtain
\begin{equation}\label{inq1}
\gamma_{ar}+\gamma_{as} - \beta \alpha >0 \mbox{ and }  \beta \gamma_{ar}(1-\alpha) + \beta\gamma_{as} < (\gamma_{ar}+\gamma_{as})(\gamma_{sr}+\eta),
\end{equation} since $R_0$ is the sum of two positive quantities. Again, with the help of the inequalities in (\ref{inq1}), we obtain the desired inequality 
$$ \gamma_{ar} + \gamma_{as} + \gamma_{sr} + \eta -\beta > 0.$$
\end{proof}
\noindent \emph{Existence of endemic equilibrium point }:
We now discuss the existence of endemic equilibrium for the SEAIRD model (\ref{sys1}) assuming demography. If $R_0>1$, then relation (\ref{R0neq}) result in the following inequality
\begin{equation}\label{R0inq2}
\begin{aligned}
\beta \chi \big(\alpha (\gamma_{sr}+\eta+d_3)+  (1-\alpha)(\gamma_{ar}&+\gamma_{as}+d_2)+
\alpha \gamma_{as}\big) > \\ &(\chi+d_1) (\gamma_{ar}+\gamma_{as}+d_2)(\gamma_{sr}+\eta+d_3).
\end{aligned}
\end{equation}
 To obtain endemic equilibrium point for the SEAIRD model, we set $ S' = 0, E' = 0, A'= 0, I' = 0,\mbox{ and } R' = 0$, so that,
 \begin{align}\label{homog}\nonumber
    0 =& ~\Lambda - \frac{\beta}{N}(A^*+I^*) S^* - d_0 S^*,\\\nonumber
    0 =& ~\frac{\beta}{N}(A^*+I^*)  S^* - (\chi + d_1) E^* ,\\
    0 =&~ \alpha  \chi E^* -(\gamma_{ar}+\gamma_{as} + d_2) A^*,\\\nonumber
   0 =&~  (1-\alpha)  \chi E^* + \gamma_{as} A^* -(\gamma_{sr} + \eta + d_3) I^*, \\\nonumber
   0 =&~ \gamma_{ar} A^*+\gamma_{sr}I^* - d_4 R^*. 
\end{align}
By solving the homogeneous system (\ref{homog}) in terms of $I^*$ we obtain the following 
\begin{align}\label{homog_soln}\nonumber
    S^* &= \frac{\Lambda}{\dfrac{\beta}{N} (\alpha \chi  m  + 1 )I^* + d_0 },  &
    E^* &= (\gamma_{ar}+\gamma_{as} + d_2)mI^*,\\
    A^* &= \alpha \chi m I^*, &
    R^* &= \frac{1}{d_4}(\alpha \chi  \gamma_{ar}  m + \gamma_{sr})I^*,
\end{align}
where $m = \displaystyle{\frac{\gamma_{sr}+\eta + d_3}{P}}$, and $P =  (1-\alpha)  \chi(\gamma_{ar}+\gamma_{as} + d_2) + \alpha  \chi \gamma_{as}$.\\[0.5ex]
Substituting the expression for $S^*$ in the second equation of (\ref{homog}) lead to
\begin{align*}
I^* = \dfrac{\dfrac{\beta}{N}(\alpha \chi m + 1)  \Lambda  - d_0(\chi + d_1)(\gamma_{ar}+\gamma_{as} + d_2)m}{\dfrac{\beta}{N} (\alpha \chi  m  + 1 )(\chi + d_1)(\gamma_{ar}+\gamma_{as} + d_2)m}.
\end{align*}
Non-trivial solution of the homogeneous system (\ref{homog}) exists when $I^*>0$. It is easy to see that denominator part of $I^*$ is positive. To establish the positivity of $I^*$, first we rewrite inequality (\ref{R0inq2}) in the form
\begin{equation}
\beta(\alpha \chi m + 1)   > (\chi+d_1) (\gamma_{ar}+\gamma_{as}+d_2)m. \label{i310}
\end{equation}
Making use of inequality (\ref{i310}) in the expression for $I^*$ leads to
\begin{align*}
  & I^*  > \frac{N}{\beta(\alpha \chi m + 1)} \Big(\frac{\Lambda}{N}-d_0\Big)>0, ~~\mbox{since}~\frac{\Lambda}{N}>d_0.
\end{align*}
Therefore, endemic equilibrium exists when $R_0>1$. Next, we present stability result for the endemic equilibrium point $D^{end}_0$.
\begin{theorem}\label{th2}
The endemic equilibrium point $D^{end}_0 = (S^*, E^*, A^*, I^*, R^*)$ of the SEAIRD model (\ref{sys1}) is asymptotically stable if $R_0>1$.
\end{theorem}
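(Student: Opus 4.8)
The plan is to prove \emph{local} asymptotic stability by linearising the (reduced) system $(S,E,A,I,R)$ about $D^{end}_0$ and invoking Theorem \ref{th:stab}: it suffices to show that every eigenvalue of the Jacobian $J(D^{end}_0)$ has negative real part. The first observation is structural. Since the $R$--equation of (\ref{sys1}) does not feed back into the other compartments, and the $D$--equation has already been discarded in forming $D^{end}_0$, the Jacobian at $D^{end}_0$ is block lower triangular: one diagonal block is the scalar $-d_4<0$ (the $R$--direction), and the other is the $4\times 4$ matrix $B$ governing $(S,E,A,I)$. Hence it is enough to prove that $B$ is Hurwitz, where, with $\lambda^{*}=\tfrac{\beta}{N}(A^{*}+I^{*})$,
$$B=\begin{bmatrix} -(\lambda^{*}+d_0) & 0 & -\tfrac{\beta}{N}S^{*} & -\tfrac{\beta}{N}S^{*} \\ \lambda^{*} & -(\chi+d_1) & \tfrac{\beta}{N}S^{*} & \tfrac{\beta}{N}S^{*} \\ 0 & \alpha\chi & -(\gamma_{ar}+\gamma_{as}+d_2) & 0 \\ 0 & (1-\alpha)\chi & \gamma_{as} & -(\gamma_{sr}+\eta+d_3)\end{bmatrix}.$$

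Next I would write the characteristic polynomial of $B$ as $y^{4}+b_1y^{3}+b_2y^{2}+b_3y+b_4$ and simplify the coefficients using the endemic identities coming from (\ref{homog}): $\lambda^{*}S^{*}=(\chi+d_1)E^{*}$, $\alpha\chi E^{*}=(\gamma_{ar}+\gamma_{as}+d_2)A^{*}$, $(1-\alpha)\chi E^{*}+\gamma_{as}A^{*}=(\gamma_{sr}+\eta+d_3)I^{*}$, and $\Lambda=(\lambda^{*}+d_0)S^{*}$. These substitutions recast the $b_i$ as sums of manifestly nonnegative terms in the (strictly positive) equilibrium coordinates and parameters. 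In particular $b_1=(\lambda^{*}+d_0)+(\chi+d_1)+(\gamma_{ar}+\gamma_{as}+d_2)+(\gamma_{sr}+\eta+d_3)>0$ is immediate from the trace, and I would establish $b_4=\det B>0$ by rewriting $\det B$ with the help of inequality (\ref{i310}) — which is exactly the reformulation of $R_0>1$ that was used to guarantee $I^{*},A^{*},E^{*},S^{*}>0$. One then applies the Routh--Hurwitz criterion for a quartic (the degree-$4$ analogue of the test used in Theorem \ref{th1}): it suffices that $b_1>0$, $b_3>0$, $b_4>0$, and $b_1b_2b_3>b_3^{2}+b_1^{2}b_4$.

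The main obstacle is this last Routh--Hurwitz inequality, $b_1b_2b_3-b_3^{2}-b_1^{2}b_4>0$. I would handle it by expanding in the abbreviations $p=\lambda^{*}+d_0$, $q=\chi+d_1$, $r=\gamma_{ar}+\gamma_{as}+d_2$, $s=\gamma_{sr}+\eta+d_3$, $u=\tfrac{\beta}{N}S^{*}$, and repeatedly eliminating $u(A^{*}+I^{*})$ through the equilibrium relation $u(A^{*}+I^{*})=qE^{*}$ together with $\alpha\chi E^{*}=rA^{*}$ and $(1-\alpha)\chi E^{*}+\gamma_{as}A^{*}=sI^{*}$, so that the whole expression collapses to a polynomial with nonnegative coefficients in $p,q,r,s$ and the equilibrium masses. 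The only term in $B$ that can act against stability is the negative-feedback entry $-u$ in the $\dot S$ row, and its size is controlled by the same identity $u(A^{*}+I^{*})=qE^{*}$; making that cancellation explicit is what forces the inequality through. If the direct bookkeeping becomes unwieldy, a fallback is to reorganise the computation by grouping the cross terms pairwise as sums of squares/products of the positive quantities above. Once $b_1,b_3,b_4>0$ and $b_1b_2b_3>b_3^{2}+b_1^{2}b_4$ are verified, every eigenvalue of $B$ — and hence of $J(D^{end}_0)$, together with $-d_4$ — has negative real part, and Theorem \ref{th:stab} yields the asserted asymptotic stability of $D^{end}_0$ whenever $R_0>1$.
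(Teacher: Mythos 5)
Your proposal follows essentially the same route as the paper: factor out the eigenvalue $-d_4$ from the $R$--direction, reduce to the quartic characteristic polynomial of the $(S,E,A,I)$ block, use the endemic equilibrium identities (in particular $\tfrac{\beta}{N}S^{*}(A^{*}+I^{*})=(\chi+d_1)E^{*}$) to sign the coefficients, and close with the quartic Routh--Hurwitz test $a_1>0$, $a_3>0$, $a_4>0$, $a_1a_2a_3>a_3^{2}+a_1^{2}a_4$. The only (minor) tactical difference is that the paper verifies the composite inequality by splitting it into the two stronger conditions $a_1a_2>2a_3$ and $a_2a_3>2a_1a_4$, whereas you propose a direct expansion; both amount to the same bookkeeping with the same equilibrium substitutions.
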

\begin{proof}
As in Theorem \ref{th1}, one may compute the Jacobian at $D^{end}_0$ 
$$J\big{(}D^{end}_0\big{)}=\begin{bmatrix}
-(\beta \frac{A^*}{N}+\beta \frac{I^*}{N} + d_0)&~~0&-\beta \frac{S^*}{N}&-\beta \frac{S^*}{N}&0\\
~~\beta \frac{A^*}{N}+\beta \frac{I^*}{N}&-(\chi+d_1)&~~\beta \frac{S^*}{N}&~~\beta \frac{S^*}{N}&0\\
~~0&~\alpha\chi&-(\gamma_{ar}+\gamma_{as}+d_2)&~~0&0\\
~~0&(1-\alpha)\chi&~~\gamma_{as}&-(\gamma_{sr}+\eta +d_3)&0\\
~~0&~~0&~~\gamma_{ar}&~~\gamma_{sr}&-d_4
\end{bmatrix}.$$
The characteristic equation of the matrix $J\big{(}D^{end}_0)$ is given by
\begin{align}\label{char2}
    \big|  J\big{(}D^{end}_0) - \lambda I \big| = 0.
\end{align}
Simplifying equation (\ref{char2}), one may easily obtain  $\lambda = -d_4$ as an eigenvalue. Other eigenvalues are roots of the equation
\begin{equation}\label{eig}
    \begin{aligned}
\big( \lambda +m_1\big)(\lambda +m_2) (\lambda + m_3) (\lambda +m_4) - \frac{ \beta \chi S^*}{N} (\lambda+d_0)\big( \lambda + \alpha m_4 +  (1-\alpha)m_3+
\alpha \gamma_{as}\big)  = 0
    \end{aligned}
\end{equation}
where
\begin{equation}
\begin{aligned}
    m_1 &=   \frac{\beta A^*}{N} + \frac{\beta I^*}{N} + d_0> 0, &
    m_2 &=  \chi + d_1 > 0,\\
   m_3  &=  \gamma_{ar} + \gamma_{as} +d_2 > 0,  &
   m_4 &=  \gamma_{sr} + \eta + d_3 > 0,
\end{aligned}
\end{equation}
More succinctly, equation (\ref{eig}) may be expressed as  
\begin{equation}\label{l4}
    \begin{aligned}
     \lambda^4 + a_1 \lambda^3 + a_2 \lambda^2 + a_3 \lambda + a_4 = 0,
    \end{aligned}
\end{equation}
with coefficients
\begin{align*}
    a_1 &= m_1+ m_2 + m_3 + m_4 ,\\
    a_2 &= (m_1+m_2)(m_3+m_4) + m_1m_2 + m_3m_4 - \frac{\beta \chi S^*}{N},\\
    a_3 &=  (m_1+m_2)m_3m_4 + (m_3+m_4)m_1m_2 -  \frac{\beta \chi S^*}{N}(\alpha m_4 +  (1-\alpha)m_3+
\alpha \gamma_{as}+d_0),\\
    a_4 &= m_1m_2m_3m_4 -d_0 \frac{\beta \chi S^*}{N} (\alpha m_4 +  (1-\alpha)m_3+
\alpha \gamma_{as}).
\end{align*}
To show that the roots of equation (\ref{l4}) have negative real parts, we use the Routh-Hurwitz Criterion (see, \cite{Book:Routh}:1.6-6) for fourth degree polynomial, which in terms of the coefficients translates to the following conditions
\begin{equation}
a_1>0,~a_3>0,~a_4>0,~a_1a_2a_3>a_3^2+a_1^2a_4. \label{rhc}
\end{equation}
It is obvious that $a_1 >0$. The second equation in (\ref{homog}), in view of (\ref{homog_soln}) and the conditions $R_0>1$, $\beta <1$ lead to 
\begin{equation}\label{S_ter2}
   \frac{\beta }{N} S^* = \frac{(\chi + d_1)E^*}{(A^* + I^*) } = \frac{m_2m_3m_4}{\chi\big(\alpha m_4 + (1-\alpha)m_3 + \alpha \gamma_{as}\big)} =\frac{\beta}{R_0}< 1.
\end{equation}
Making use of (\ref{S_ter2}), expressions of $a_3$, $a_4$ may be rewritten in the following form establishing their positivity
\begin{align*}
a_3 &=  m_1(m_3m_4 + m_2m_3+ m_2m_4) -  d_0 \frac{\beta \chi S^*}{N} > d_0\Big(m_3m_4 + m_2m_3+ m_2m_4 - \frac{\beta \chi S^*}{N}\Big) > 0,\\
a_4 &= m_1m_2m_3m_4 -d_0 m_2m_3m_4 = \big( \frac{\beta A^*}{N} + \frac{\beta I^*}{N} \big)m_2m_3m_4 > 0.
\end{align*}
In the same way, it can be shown that $a_2>0$. Finally, to establish the last condition in the Routh-Hurwitz Criterion (\ref{rhc}), it is enough to verify the following two inequalities :
\begin{align}
    a_1 a_2 a_3 > 2 a_3^2 ~&\Rightarrow~ a_1 a_2 > 2 a_3, \label{cond11} \\
     a_1 a_2 a_3 > 2 a_1^2 a_4 ~&\Rightarrow~ a_2 a_3 > 2 a_1 a_4.\label{cond21}
\end{align}
To prove inequality (\ref{cond11}), we begin with
\begin{align*}
    &a_1 a_2 - 2 a_3 = (m_1+ m_2 + m_3 + m_4) \bigg( (m_1+m_2)(m_3+m_4) + m_1m_2 + m_3m_4 - \frac{\beta \chi S^*}{N} \bigg)-\\ &~~~~~~~~~~~~~~~~ 2 \big( m_1m_3m_4 + (m_3+m_4)m_1m_2\big).
\end{align*}
Following some simple algebraic manipulations, we get
\begin{align*}
    &a_1 a_2 - 2 a_3 =\frac{1}{P_1}\bigg( \Big(m_1^2m_3 + m_1^2m_4 + m_1^2m_2+ m_1m_2^2 +m_1m_3^2 + m_1m_4^2 +   m_3m_4^2 + m_3^2m_4 + \\&~~~~~~~~~~~~~~~~~ m_1m_3m_4  + m_2m_3m_4   \Big)(\alpha m_4 + (1-\alpha)m_3 + \alpha \gamma_{as}) + C_1 \bigg),
\end{align*}
where $P_1 = \alpha m_4 + (1-\alpha)m_3 + \alpha \gamma_{as}$ and
\begin{align*}
    C_1 &=  ( m_2^2m_3+m_2^2m_4 + m_2m_3^2 +  m_2m_4^2 + m_1m_2m_3 + m_1m_2m_4+2m_2m_3m_4 )\times\\&~~~~(\alpha m_4 + (1-\alpha)m_3 + \alpha \gamma_{as}) -  (m_1+ m_2 + m_3 + m_4) m_2m_3m_4\\
    &=\Big(m_1m_2m_3\big( (1-\alpha)m_3 + \alpha \gamma_{as}\big) + m_1m_2m_4(\alpha m_4 + \alpha \gamma_{as}) \Big)+\\&~~~~
    \Big(m_2^2m_3\big( (1-\alpha)m_3 + \alpha \gamma_{as}\big) + m_2^2m_4(\alpha m_4 + \alpha \gamma_{as}) \Big)+\\&~~~~
    \Big(m_2m_3^2\big( (1-\alpha)m_3 + \alpha \gamma_{as}\big) + m_2m_3m_4(\alpha m_4 + \alpha \gamma_{as}) \Big)+\\&~~~~
    \Big(m_2m_4^2(\alpha m_4 + \alpha \gamma_{as}) + m_2m_3m_4\big( (1-\alpha)m_3 + \alpha \gamma_{as}\big) \Big) > 0.
\end{align*}
Therefore, $ a_1 a_2 - 2 a_3 > 0$. Next we rewrite the expression for $a_2a_3-2a_1a_4$
\begin{align*}
a_2 a_3 - 2 a_1 a_4 &= \bigg( (m_1+m_2)(m_3+m_4) + m_1m_2 + m_3m_4 - \frac{\beta \chi S^*}{N} \bigg) \big( m_1m_3m_4 +\\& (m_3+m_4) m_1m_2 \big) - 2 ( m_1+ m_2 + m_3 + m_4) (  m_1m_2m_3m_4 -d_0 m_2m_3m_4 ).
\end{align*}
Performing similar algebraic manipulations, one may verify inequality (\ref{cond21}).
\end{proof}
\noindent In the next subsection, we computationally validate the effect of asymptomatic infections in rapid transmission of COVID-19. The open-source \texttt{python3} inbuilt ODE solver \texttt{odeint} \cite{madhab:github} has been used to  simulate the SEAIRD model along with the social contact data of 16 different age-class available in \cite{age-struct}, for all our computational results.


\subsection{Asymptomatic and symptomatic infectious}\label{sec:03:asym}
A significant proportion of COVID-19 infections are asymptomatic in nature. Various agencies have estimated the proportion of asymptomatic infections differently. On 21 April 2020, World Health Organization (WHO) declared about $80\%$ of the total infected population are asymptomatic. Later Indian Council of Medical Research (ICMR), India, has declared $69\%$ of COVID-19 infections in India are asymptomatic. Asymptomatic individuals do not show any noticeable symptoms but continue to transmit the infection. The basic SEAIRD model is incorporated with the social contact matrix, therefore, disease transmission by asymptomatic population would take place under the usual social contact pattern. Whereas the symptomatic population may be assumed to transmit at most $10\%$ of the social contact pattern due to reduced contact levels. To understand the contributions of symptomatic and asymptomatic individuals in further transmitting the infection to the susceptible population, we distinguish the infected individuals into two sub-classes asymptomatic ($A$) and symptomatic ($I$). First in Fig. \ref{fig:symptomatic},
we consider the model with zero asymptomatic cases, that is, all the infected cases are symptomatic and obtain the corresponding epidemic evolution curve.\\

\noindent Next, we investigate the influence of asymptomatic infections $A$ on the resulting epidemic evolution curve. We assume that the contact patterns of asymptomatic individual age-groups are same as those of the usual social contact matrix i.e. $C^a = C$.
On the other hand, it is customary to believe that contact pattern of symptomatic individuals would be significantly reduced due to ongoing social epidemic prevention campaigns. Since the contact rate of asymptomatic individuals is greater than that those of symptomatic infected, we let $C^s = f_{sa} C^a$, where $f_{sa}$ lies between $0$ and $1$.
\begin{figure}
  	\includegraphics[width=\linewidth]{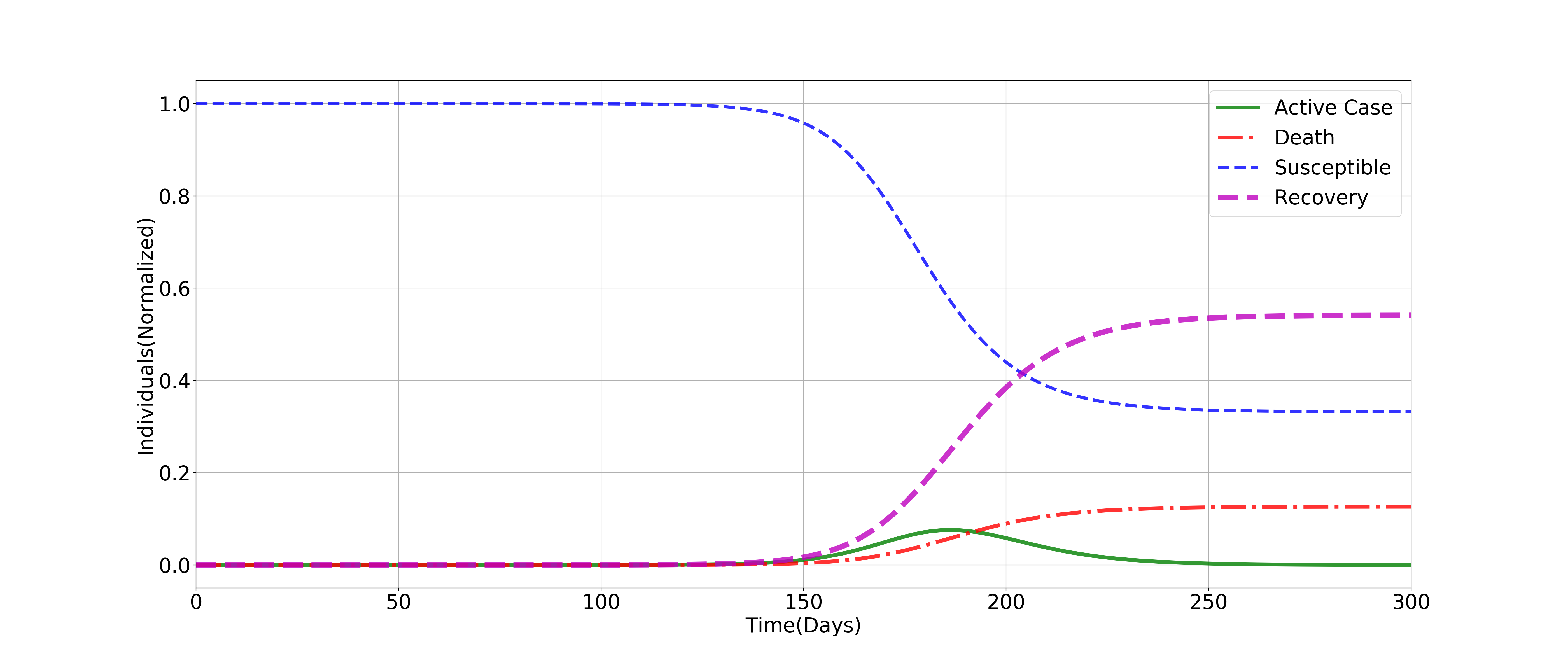}
  	\caption{\textbf{SEAIRD :} Only symptomatic infected class is present without any lockdown ($\alpha = 0.0,~ \beta = 	0.19,~ \chi = 0.29,~ \gamma_{as} = 0.0,~ \eta = 1/30,~ \gamma_{ar} = 0.0,~ \gamma_{sr} = 1/7,~ f_{sa} = 0.1$).}\label{fig:symptomatic}\label{fig:symptomatic}
\end{figure}
Let us assume a situation where $80\%$ infections are asymptomatic and exposed to the susceptible subject to natural social contact pattern. This situation leads to rapid spread of the disease and almost the entire susceptible population gets infected as evident from Fig. \ref{fig:asym}. The susceptible curve quickly falls to zero. Further, the active case peak size is almost three times the $100\%$ symptomatic infections scenario, see Fig. \ref{fig:symptomatic}. A very important point to be observed here is that when asymptomatic cases are more, the number of death cases is significantly reduced. Furthermore, the peak of the active cases is attained around 140 days earlier as compared to the purely symptomatic scenario. This indicates quick disease transmission when asymptomatic infections are more in number. This model is perfect when no control measures are adopted by the policy makers and when the population is very large in size, like 130 crore in India. Then the active case count may hit 40 crore level and about 5\% of the active cases, that is, about 2 crore individuals would turn out to be critical cases needing ICU and ventilator support. Thus, overwhelming the hospital facilities. Therefore, necessary control measures are essential to bring down the peak size of active cases. One of them could be lockdown measures to confine the normal social contact to home contact only. These measures have been adopted by most of the countries till now. Estimating the expected impact of the lockdown, and the potential effectiveness of different exit strategies is critical to inform decision makers on the management of the COVID-19 health crisis.
\begin{figure}
    \includegraphics[width=\textwidth]{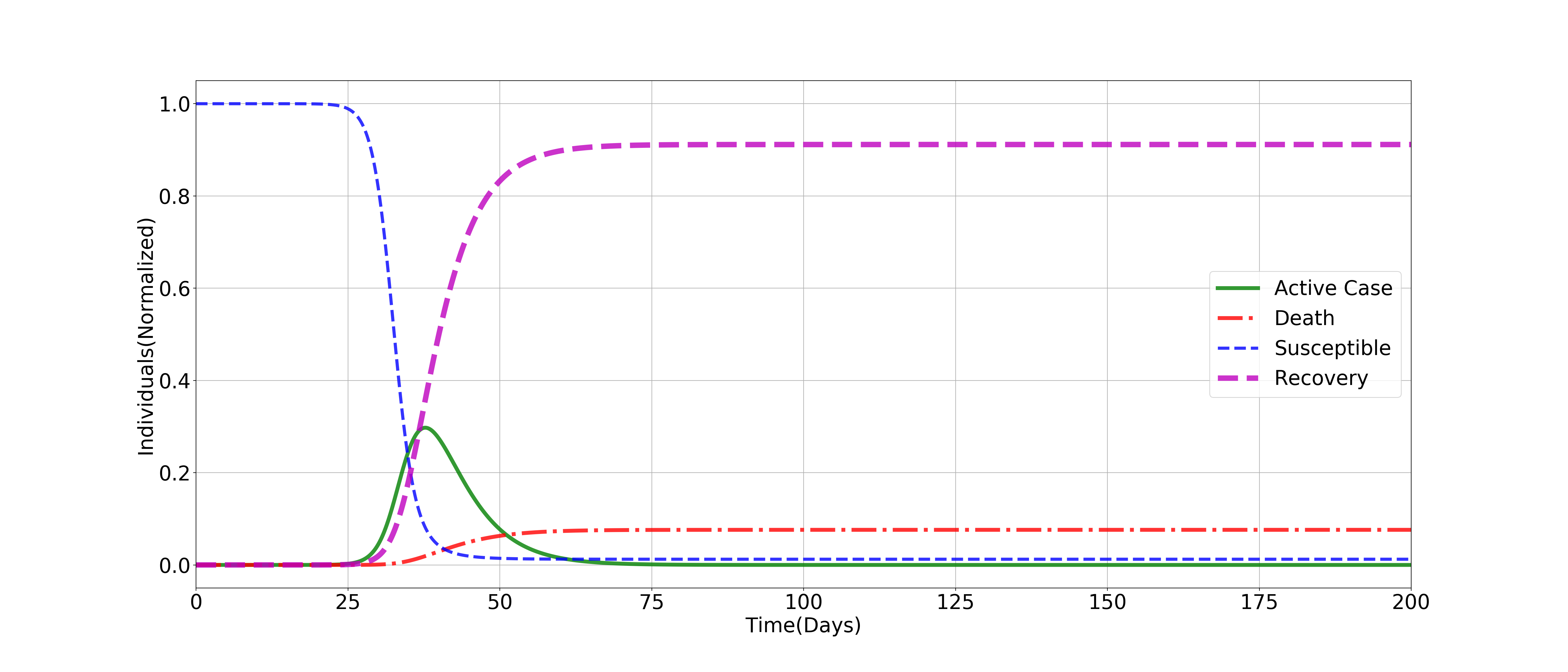}
    \caption{\textbf{SEAIRD :} Both symptomatic and asymptomatic infections are present with no lockdown ($\alpha = 0.8,~ \beta = 0.19,~ \chi = 0.29,~ \gamma_{as} = 0.1,~ \eta = 1/30,~ \gamma_{ar} = 2/7,~ \gamma_{sr} = 1/7,~ f_{sa} = 0.1$)}
    \label{fig:asym}
\end{figure}


\subsection{Impact of lockdown}\label{sec:03:LD}
Rapid transmission and spread of SARS-CoV-2 infections across the globe has led to a situation where more than half of the global population has been put through strict lockdowns and other forms of social distancing measures. More than $90$ countries, including India has been under some form of lockdown simultaneously. This phase has generated global economic turmoil and human miseries. In this context, modelling the impact of various lockdown strategies is of paramount importance. Estimating the expected impact of the lockdown, and the potential effectiveness of different exit strategies is critical to inform decision makers involved in managing the COVID-19 health crisis. While in countries like Italy, Spain, Germany and UK, the peak of the pandemic occurred within weeks of the national lockdown. In India, however, $90$ days have passed since the lockdown was imposed, but the peak has not yet arrived. This indicates that the spread of COVID-19 in India is not exponential so far but it is still growing. We would like to understand whether a prolonged lockdown can just defer the advent of epidemic peak or it would also reduce the peak size significantly and what would be the percentage of reduction? A case study of China reported in \cite{covid:lock:china}, a significant increase in doubling time from 2 days (95\% CI: 1.9 -- 2.6) to 4 days (95\% CI: 3.5 -- 4.3), after imposing lockdown. Researchers also investigated the impact of lockdown in France \cite{covid:lock:france1} and estimated the basic reproduction number at 3.0 (95\% CI: 2.8 -- 3.2) before lockdown and the population infected by COVID-19 as of April 5, 2020 to be in the range 1\% to 6\%. \\

\noindent The average number of contacts is assumed to be reduced by 80\% during lockdown, thereby, leading to a substantial reduction in the reproduction number. But then question arises, can we extend the lockdown longer enough to bring down the number of active case to zero. A study on France COVID-19 data \cite{covid:lock:france2}, based on SEIR model observed that the social distancing is not enough to control the outbreak. We can see in Fig. \ref{fig:lc} that the lockdown can reduce the active case and also the peak may be deferred but full lockdown can't stop the growth of the epidemic. We are trying to see the influence of different lockdown strategies in a 200 days window.  Full lockdown ($100\%$ implementation) disconnect all forms of contacts and only the home contact is allowed, which is imposed for 200 days.
Staggered  easing of lockdown, in Fig. \ref{fig:SLD}, is made phase wise for 93 days only and remaining days are fixed controls by only 20 percent lockdoown, which very marginal and only for the containment zones. As per the lockdown strategy made by Govt. of India  there are five different phases until
June 30, Phase 1 (25 March -- 14 April), Phase 2 (15 April -- 3 May), Phase 3 (May 4 -- May 17), Phase 4 (May 18 -- May 31), and Phase 5 (June 1 -- June 30). These two lockdown strategies compare with $0\%, 60\% \mbox{ and } 80\%$ uniform lockdown for 100 days. \\

\noindent\textcolor{black}{\it Lockdown basis function:} 
\noindent Lockdown implementation in the epidemic models is achieved through the social contact matrix. For this purpose the contact matrix is assumed to be a function of time. We consider a time-dependent control
\begin{align}\label{LDfun}
u(t) = 1 + \frac{P_{ld}}{2} \Big\{ \tanh\Big( \frac{t - t_{off} - t_{woff}}{t_{woff}} \Big) -\tanh\Big( \frac{t - t_{on}}{t_{won}} \Big)\Big\},
\end{align}
where $t_{on}$ and $t_{off}$, respectively, denote lockdown start and end dates. 
Initially, the function value is `1' and gradually decrease and meet at `0' when lockdown is released. The delay in implementation of lockdown is tuned by $t_{won}$ and $t_{woff}$. Effectiveness of lockdown, expressed in percentage, may be controlled by parameter $P_{ld}$. 
\begin{figure}
    \includegraphics[width=\textwidth]{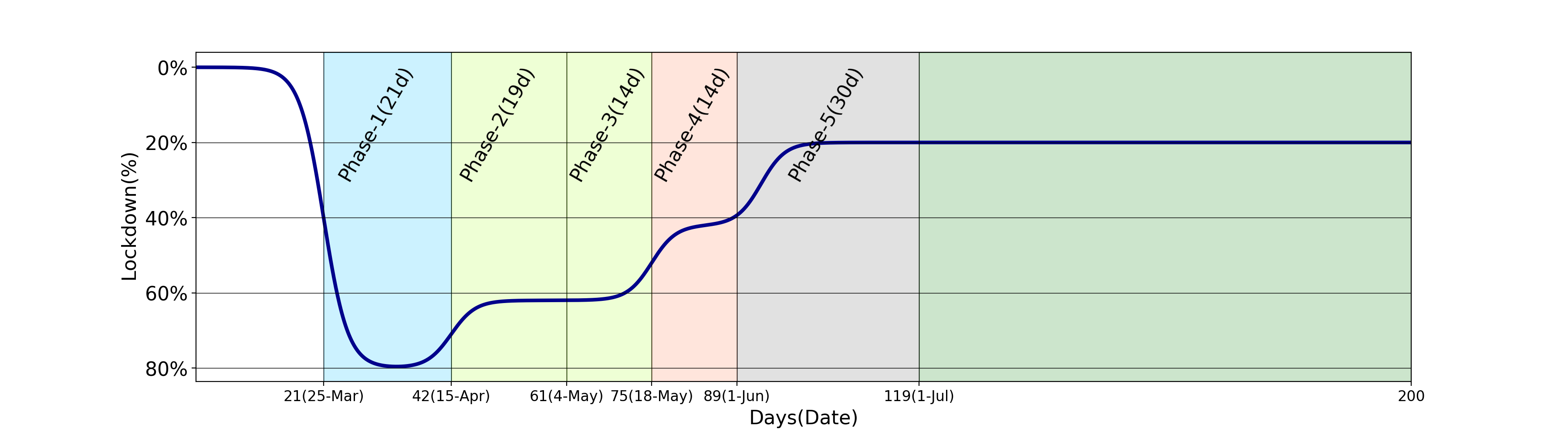}
    \caption{\textbf{Lockdown Strategy:} white color: no lockdown; sky blue: $80\%$ lockdown; greenish yellow: $60\%$ lockdown; coral: $40\%$ lockdown; grey and green: $20\%$ lockdown.}\label{fig:SLD}
\end{figure}
Partitioning contacts into spheres of home, workplace, school and all other categories, the time-dependent contact matrix may be written as\\[-2ex]
\begin{align}
C_{ij}(t) = C^H_{ij} + u^W (t)C^W_{ij} + u^S (t)C^S_{ij} + u^O (t) C^O_{ij}
\end{align}
where $u^W (t)$, $u^S(t)$ and $u^O(t)$ are the corresponding lockdown control functions on the contact matrices of workplace, school and others. Fig. \ref{fig:SLD} is obtained by a suitable linear combination of the control basis functions (\ref{LDfun}).\\
\begin{figure}
    \includegraphics[width=\textwidth]{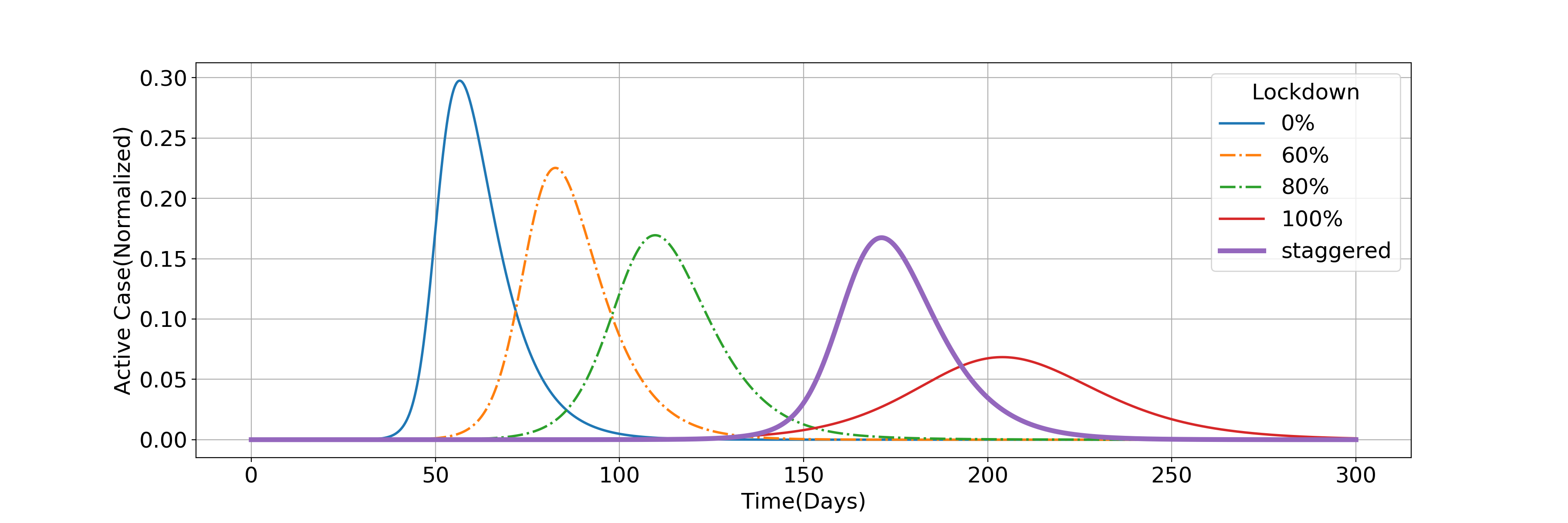}
    \caption{\textbf{SEAIRD :} Effect of different lockdown ($\alpha = 0.8,~ \beta = 0.19,~ \chi = 0.29,~ \gamma_{as} = 0.1,~ \eta = 1/30,~ \gamma_{ar} = 2/7,~ \gamma_{sr} = 1/7,~ f_{sa} = 0.1$)}
    \label{fig:lc}
\end{figure}

\noindent The observation in Fig. \ref{fig:lc} in all cases except the staggered lockdown, can be seen clearly, the peak of the active infections appear within the uniform lockdown period. With increased intensity of lockdown, peak of the infection gets delayed and also the peak size reduces gradually. It is interesting to observe that when the staggered lockdown of $93$ days (see Fig. \ref{fig:SLD}) duration ends, the exponential growth of infections sets in and the peak appears around 118 days. Delay in the advent of peak infection and reduction in its size is significantly better in the staggered lockdown scenario, than that of all other uniform lockdown scenarios except the unrealistic $100\%$ lockdown case. \\ 

\noindent There are still some questions which need to be investigated and answered. For instance, what is the optimal number of days for which the lockdown should be implemented and what is the impact of lockdown for prolonged time periods. Again, the current model does not have any scope for the interventions like the quarantine, hospitalization/isolation and social distancing practices. In the next section, we also analyse the influence of various intervention strategies on the real data which would eventually help us in estimating relevant modelling parameters.


\section{SEAIRD Model with Control Measures}\label{sec:04}
In this section, we extend our SEAIRD model to an SEAIRD-control measure model by incorporating two additional compartments of quarantined (Q) individuals and hospitalized (H) individuals, motivated by the works \cite{quarantine:2020, feng:2007, feng:2007oct, IMSC:2020}. A very important and interesting feature of this new model is the reverse flow from quarantine class (Q) to susceptible class (S) due to the preventive control measure quarantine, see Fig. \ref{fig:M1}. Later in this section we will see the effectiveness of this feature in significantly bringing down the infection levels. Let $\lambda_i(t)$ denote the incidence function of the age group $i$ due to infected individuals from all other age groups. Then
    $$\lambda_i(t) = \beta(t) \sum_{j=1}^M \bigg(C_{i\,j}^a\dfrac{A_j(t)}{N_j} +  C_{i\,j}^s\dfrac{I_j(t)}{N_j} +  C_{i\,j}^h \dfrac{(1-\rho)H_j(t)}{N_j}\bigg),\; \mbox{ where } \rho \in [0, 1].$$
We have already described about $C^a_{i\,j} \mbox{ and } C^s_{i\,j}$ earlier. Here $C^h_{i\,j}$ denote the number of contacts of the hospitalized person in age group $i$  with susceptible individuals of age group $j$. The new control parameter $\rho$ represent the effectiveness of hospitalization or isolation measure applied on symptomatic individuals. The parameter $\rho = 0$, $0 < \rho < 1$, and $\rho = 1$, respectively, describe completely effective, partially effective, and completely ineffective isolation measure.  The epidemic age and contact-structured SEAIRD model with control measures quarantine and isolation (or hospitalization) may be described by the following system of ODEs:
\begin{align}\label{contro_sys}
\dot{S}_i(t) & = -\lambda_i(t) S_i(t) + (1-q)\phi_{qh} Q_i,  	&   \dot{E}_i(t) & = \lambda_i(t) S_i(t) - \chi E_i,\\ \nonumber
\dot{Q}_i(t) & = \alpha_1 \chi E_i - \phi_{qh}Q_i,   	&	\dot{A}_i(t) & = \alpha_2 \chi E_i- (\gamma_{as} + \delta_{ar}  ) A_i,  \\ \nonumber
\dot{I}_i(t) & = \alpha_3 \chi E_i + \gamma_{as} A_i - ( \phi_{sh} + \delta_{sr}) I_i, 	&	\dot{H}_i(t) & = \phi_{sh} I_i + q\phi_{qh}Q_i -(\delta_{hr} + \eta )H_i, \\\nonumber
\dot{R}_i(t) &= \delta_{ar} A_i + \delta_{sr}I_i + \delta_{hr}H_i,   	&	\dot{D}_i(t) &= \eta H_i,~~~ i= 1,2,\cdots,M,
\end{align}

\noindent subject to the initial conditions considered in (\ref{initial}) along with the additional ones $Q_i = Q_i^0 \geq 0$ and $H_i= H_i^0 \geq 0$. As usual $i$ denotes the $i$-th age-group and parameters $\alpha_1$, $\alpha_2$, $\alpha_3$ are non-negative and satisfy the constraint $\alpha_1 + \alpha_2 + \alpha_3 = 1$. For the $i$-th age-group, $Q_i(t)$ and $H_i(t)$, respectively, denote the number of quarantined and hospitalised individuals at time $t$. The transmission and transition parameters are described below:
\begin{align*}
\beta(t) &= \mbox{rate of infection due to contact at time }t,~~~~~~~~~~~~~~~~~~~~~~~~~~~~~~~~~~~~~~\\
\chi &= \mbox{rate of infected from exposed population},\\
\phi_{qh} &= \mbox{rate of transition from  quarantine to hospitalization},\\
\gamma_{as} &= \mbox{rate of transition of asymptomatic population to symptomatic population},\\
\delta_{ar} &= \mbox{rate of recovery of asymptomatic infected individual},\\
\phi_{sh} &= \mbox{rate of transition of symptomatic population to  hospitalization},\\
\delta_{hr} &= \mbox{rate of recovery of hospitalized individual},\\
\eta &= \mbox{death rate of the population},\\
\gamma_{sr} &= \mbox{rate of recovery of symptomatic infected individual},\\
 \alpha_1 &= \mbox{fraction part of $\chi$ from $E_i$ to $Q_i$ class},\\
 \alpha_2 &= \mbox{fraction part of $\chi$ from $E_i$ to $A_i$ class},\\
 \alpha_3 &= \mbox{fraction part of $\chi$ from $E_i$ to $I_i$ class},\\
 q  &= \mbox{control parameter for quarantined class}. 
\end{align*}

\noindent The parameter $q$ describe the strength of the reverse flow feature, as discussed earlier, in the model. Parameter values $q=0$, $q\in (0,1)$, and $q=1$, respectively, denote completely effective, partially effective, and ineffective reverse flow. In our model, there is a latency period or exposed period ($1/\chi$ days) after transmitting the disease from susceptible to potentially infective persons but before these potential infectives gain symptoms and can transmit infection,  which is well discussed in \cite{book:brauer}. We assume that no one gets the disease during the exposed period. The exposed who are not quarantined become infective at the rate $\alpha_2 \chi$ as asymptotic infections ($A$) and at the rate $\alpha_3 \chi$ as symptomatic infections ($I$). The fractional part $ q \phi_qh$ goes to isolated class and rest of the fractional part $ (1-q)\phi_{qh}$ again moves back to the susceptible population, who are quarantined. From the asymptomatic class, some individuals recover at the rate $\delta_{ar}$ and some persons gain symptoms at the rate $\gamma_{as}$. All the symptomatic members are monitored and leaving from the symptomatic class to isolated class ($H$) at the rate $\phi_{sh}$. Finally, all the isolated members either recover ($R$) at the rate $\delta_{hr}$ or eventually meet death due to the disease ($D$) at the rate $\eta$.

\tikzstyle{level 1}=[level distance=30mm, sibling distance=30mm]
\tikzstyle{level 2}=[level distance=30mm, sibling distance=15mm]
\tikzstyle{level 3}=[level distance=20mm]

\tikzstyle{level 1}=[level distance=30mm, sibling distance=30mm]
\tikzstyle{level 2}=[level distance=30mm, sibling distance=25mm]
\tikzstyle{level 3}=[level distance=25mm]

\begin{figure}
\centering
\begin{tikzpicture}[grow=right,->,>=angle 60]
  \node[draw] {$S$}
   	child {node[draw] {$E$}
	child {child {node[draw]{$I$}
	edge from parent node [below]{$\alpha_3 \chi$}}
	child{node[draw]{$A$}edge from parent node [above]{$\alpha_2 \chi$}}}};
\node[draw] at (6.0,-3.5) {$Q$};
\draw [->](6.3,-3.5) -- (8.2,-3.5);
\node[draw] at (8.5,-3.5) {$H$};
\draw[<-](8.5,-3.25) -- (8.5,-1.5);
\draw[<-](8.5,-1) -- (8.5,1);
\node[draw] at (12.4,0.1) {$D$};
\node[draw] at (10.2,0.1) {$R$};
\draw [->] (8.8,-3.5) to [out=5,in=-90] (12.4,-0.15);   
\draw [->] (8.8,-3.5) to [out=5,in=-90] (10.4,-0.15);   
\draw[->](8.8,1.3) to [out=-12,in=115] (10.2,0.35);		
\draw[->](8.75,-1.3)  to [out=10,in=-100] (10.2,-0.15);
\draw[<-] (0.24,-0.05) to [out=-10,in=190] (5.7,-3.5);	
\draw[->] (3.0,-0.25) to [out=-25,in=150] (6.0,-3.17);	
\node at (4.5,0.25) {$(\alpha_2 + \alpha_3 )\chi $};
\node at (8.8,-0.0) {$\gamma_{as}$};
\node at (8.8,-2.3) {$\phi_{sh}$};
\node at (9.9,1.15) {$\delta_{ar}$};
\node at (10.0,-1.1) {$\delta_{sr}$};
\node at (10.5,-2.3) {$\delta_{hr}$};
\node at (11.5,-2.6) {$\eta$};
\node at (4.9,-1.7) {$\alpha_1 \chi$};
\node at (7.4,-3.7) {$q\phi_{qh}$};
\node at (2.8,-2.25)[rotate=-50] (N) {$(1-q)\phi_{qh}$};
\node at (1.4,0.3)[rotate=0] (N) {$\lambda(t)$};
\end{tikzpicture}
\vspace{-0.2cm}
\caption{Disease transmission diagram (SEAIRD-Control model)} \label{fig:M1}
\end{figure}
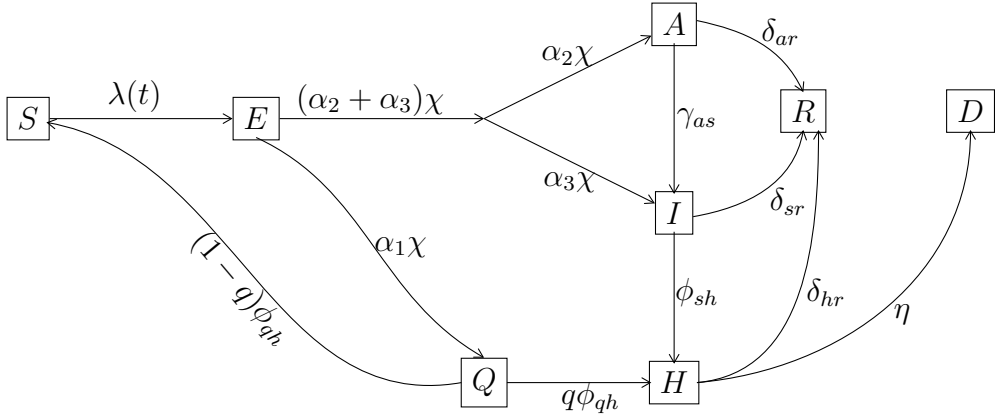


\subsection{Controlled reproduction number}\label{sec:04:R0}
To find the control reproduction number, we  linearise the dynamical system (\ref{contro_sys}) at a disease free equilibrium point $(S_{i}^0, \textbf{0}_{1\times 7} )$, where  $S_{i}^0 = N_i, ~(1\leq i \leq M)$.
Let $Y = [E, Q, A, I, H]^T $ where each of the compartments $E$, $Q$, $A$, $I$, $H$ represent $M$ dimensional vectors. From the linearised system (\ref{contro_sys}), we obtain
\begin{equation}\label{e1}
\dot{Y} = (F - W)Y,
\end{equation}
\noindent where $F$ and $W$, respectively, denote transmission (new infections) and transition (compartmental change). Therefore, expressions for $F$ and $W$ are
$$F = \begin{bmatrix}
0&0& \beta & \beta f_{sa} &f_{sh} \beta (1-\rho)\\
0&0&0&0&0\\
0&0&0&0&0\\
0&0&0&0&0\\
0&0&0&0&0
\end{bmatrix}\otimes K,$$
$$W = \begin{bmatrix}
\chi&0&0&0&0\\
-\alpha_1 \chi&\phi_{qh}&0&0&0\\
-\alpha_2 \chi&0&\gamma_{as}+\delta_{ar}&0&0\\
-\alpha_3 \chi&0&-\gamma_{as}&\phi_{sh} + \delta_{sr} &0\\
0&- q\phi_{qh}&0&-\phi_{sh}&\delta_{hr} + \eta\\
\end{bmatrix} \otimes \textbf{I}_M, $$ where $\otimes$ is the kronecker product and $\displaystyle{K_{i\,j} = \frac{C_{i\,j} N_{i}}{N_j}, \,(1 \leq i, \,j \leq M)}$. Let $R_c$ denote control reproduction number as it is guided by the disease control parameters $\rho$ and $q$ corresponding to the isolated and quarantined populations. Then $R_c$ is defined as the maximum of absolute eigenvalues of the next generation matrix $F W^{-1}$ i.e. $\displaystyle{\mathit{R}_c = \rho (F W^{-1})}$. In more realistic model, we need to consider the rate of new infection $\beta$ as time dependent, which will be discussed in more detail later in the social distancing subsection.\\ 

\noindent\textcolor{black}{\it Effective control reproduction number :} To find the time dependent effective reproduction number $R_c^e(t)$, we replace $N_i$ by $S_i(t)$ and $C_{i,\,j}$ by $C_{i,\,j}(t)$ in the linearised system. Similar to $R_c$, we compute the effective control reproduction number $$\displaystyle{\mathit{R}_c^e(t) = \rho \big{(}F W^{-1}(t)\big{)}}$$ at any time $t$. 
One may refer Feng {\it et al.} \cite{feng:2007} for control reproduction number of the control model and how it is different from basic reproduction number. Again Feng \cite{feng:2007oct} has discussed the exponential and gamma distribution models of latent and infectious periods due to Quarantine ($Q$) and Hospitalization ($H$) measures.


\subsection{Quarantine and hospitalization as new interventions} \label{sec:04:HQ}
In this subsection, we computationally analyse the impact of the control measures on the epidemic spread. 
The contact matrix in the model is of order $M$($= 16$), the number of age groups. For each age group, there are $8$ coupled Ordinary Differential Equations (ODEs) in the SEAIRD-control model. Therefore, a total of $8M$($= 128$) ODEs are required to be solved simultaneously. The open-source \cite{madhab:github} \texttt{python3} inbuilt ODE solver \texttt{odeint} has been used to  simulate the control based epidemic model. Our solver is a modified and extended version of the python based open-source \texttt{pyross} \cite{pyross:github}.\\
\begin{figure}
    \includegraphics[width=\textwidth]{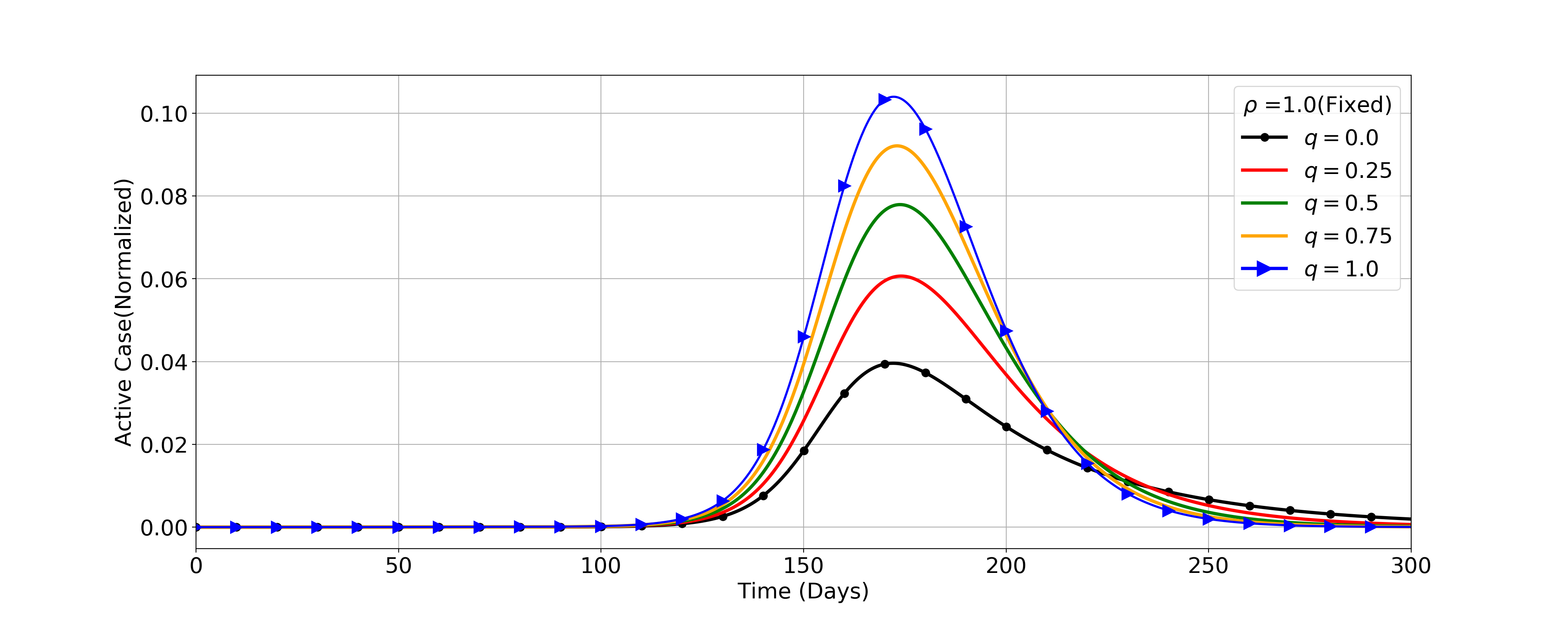}
    \caption{\textbf{SEAIRD-Control :} Figure shows active case profiles for different values of quarantine control parameter $q$ and fixed hospitalization control parameter, $\rho=1.0$. The profiles are reducing in size when $q$ value changes from $1.0$ to $0.0$. The other parameters are as follows: $\alpha_d = 0.05,~ \beta =0.37,~ \chi = 0.29,~ \alpha_1 = 0.7,~ \alpha_2 = 0.2,~ \alpha_3 = 1 - (\alpha_1 + \alpha_2),~ \phi_{qh} = 1/10,~ \gamma_{as} = 0.1,~ \delta_{ar} = 2/7,~ \phi_{sh} = 1/2,~ \delta_{sr} = 1/7,~ \delta_{hr} = (1-\alpha_d)/10,~ \eta = \alpha_d/10,~ f_{sa} = 0.1,~ f_{sh} = 0.1 $ under lockdown strategy of Fig. \ref{fig:SLD}.}
    \label{fig:q_var}
\end{figure}

\noindent The SEIR type mathematical models in epidemiology have been studied in the past with control policies and validated with the real data. Those studies helped in choosing the control parameters appropriately. Many SEIR type control models have been studied to measure the effectiveness of different types of control parameters. In 2003 Lipsitch {\it et al.} \cite{m:2003} studied the control of Severe Acute Respiratory Syndrome (SARS) spread from 2002 to 2003, by taking into account two interventions: ($i$) Isolation of symptomatic cases to prevent further transmission and ($ii$) Quarantine and monitoring of exposed contacts of active cases, so that possible new infections may be identified and isolated easily. \\
\begin{figure}
    \includegraphics[width=\textwidth]{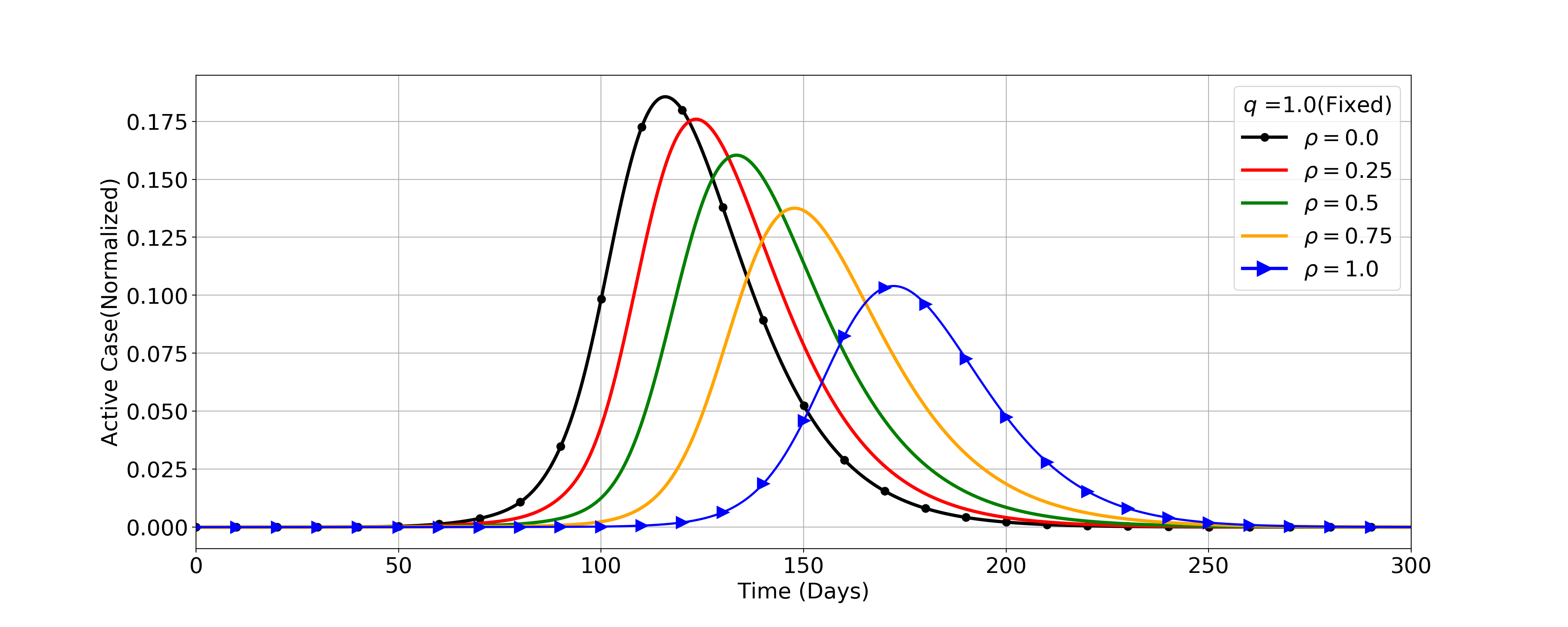}
    \caption{\textbf{SEAIRD-Control :} Figure shows active case profiles for different values of hospitalization control parameter $\rho$ and fixed quarantine control parameter, $q=1$. The profiles shift from left to right as well as reduce in size when $\rho$ value changes from $0$ to $1.0$. The other parameters are same as specified in Fig. \ref{fig:q_var}.}
    \label{fig:rho_var}
\end{figure}

\noindent Fig. \ref{fig:q_var} depict the influence of Quarantine (Q) measure on the epidemic spread when hospitalization is assumed to be completely effective, that is, $\rho=1$. As $q$ value changes from $1$ (ineffective quarantine) to $0$ (completely effective quarantine), the active infections peak size accordingly reduce. Next we fix the quarantine parameter at $q=1$ and simulate the control model (\ref{contro_sys}) while varying $\rho$ value from $0$ to $1$. Fig. \ref{fig:rho_var} display the active infections profiles for different values of $\rho$ ($0.0$, $0.25$, $0.5$, $0.75$, $1.0$). One may see that the profiles shift from left to right, indicating delay in peak arrival, along with peak size reduction. A combination of this two control measures may lead to significant reduction in active infections peak size along with delay in peak arrival. 


\subsection{Social distancing}\label{sec:04:SD}
Here we consider social distancing as a behavioural change in the general population, which need some amount of time to successfully percolate in a society. As the epidemic progress with increased intensity of infection, awareness level for social distancing norms among the population also rises.
Due to this behavioural change new infections rate starts falling. To capture the effect of this behavioural change, we need to introduce time dependent incidence parameter $\beta(t)$ gradually decreasing with time. We consider the logistic function as a model for time dependent incidence parameter $\beta(t)$ and refer it as social distancing function
\begin{equation}\label{sdf}
    \beta(t) = \beta_{min} + \dfrac{\beta_{max} -\beta_{min}}{1 + e^{-k(t - t_m)}}.
\end{equation}
The parameters in (\ref{sdf}) are explained in the caption of Fig. \ref{fig:testA}. We intend to model the impact of social distancing practices along with lockdown policies. During strict lockdown phase (80\%), we assume that the impact of social distancing practices are not very significant because the prevailing social contact pattern is essentially confined to home. But as the lockdowns are gradually relaxed ($< 80\%$), other forms of social contact increase. This is when the impact of social distancing awareness and practices comes into play. Therefore, we introduce social distancing function $\beta(t)$ in our model during 60\% and subsequent lockdown phases with suitable time parameter $t_m$.  \\
 \begin{SCfigure}
 \includegraphics[height=2.0in,width=2.8in]{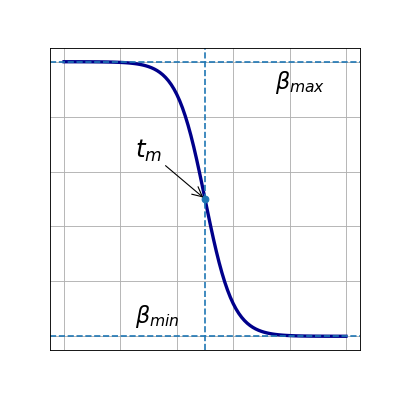}
\hspace{-0.0cm}
\vspace{-0.6cm}
\caption{Here in the social distancing function $\beta_{max}$ = maximum value of $\beta(t)$, $\beta_{min}$ = minimum value of $\beta(t)$, $k$ = steepness of the function or growth rate of the function, and $t_m$ = the midpoint of the sigmoid. This can reduce the contact matrix or the rate of infection by $\beta_{max}-\beta_{min}$ fractions.}\label{fig:testA}
\end{SCfigure}\label{fig:SD}

\noindent We have seen the impact of the control measures quarantine and hospitalization in subsection \ref{sec:04:HQ}. Their combined effect helps in further reducing the number of infections and delaying the advent of its peak significantly. Thus, allowing civic and health authorities with minimum time to ramp up the infrastructure required for better managing the ensuing epidemic outbreak. In addition to these control measures, if a large class of the population develops awareness for social distancing and personal hygiene, the disease transmission can be slowed down further. In the next subsection, we simulate the SEAIRD control model with social distancing practices.


\subsection{Analysis based on real data}\label{sec:04:matching}
In this subsection, we numerically simulate (using Python based solver) the SEAIRD control model (\ref{contro_sys}) laced with the social distancing function (\ref{sdf}) and match the computed results with real COVID-19 data of India till May 15, 2020. By matching the computed results with the real data, we intend to estimate the model parameters to the best possible extent.
\begin{figure}[!h]
    \includegraphics[width=1.0\textwidth]{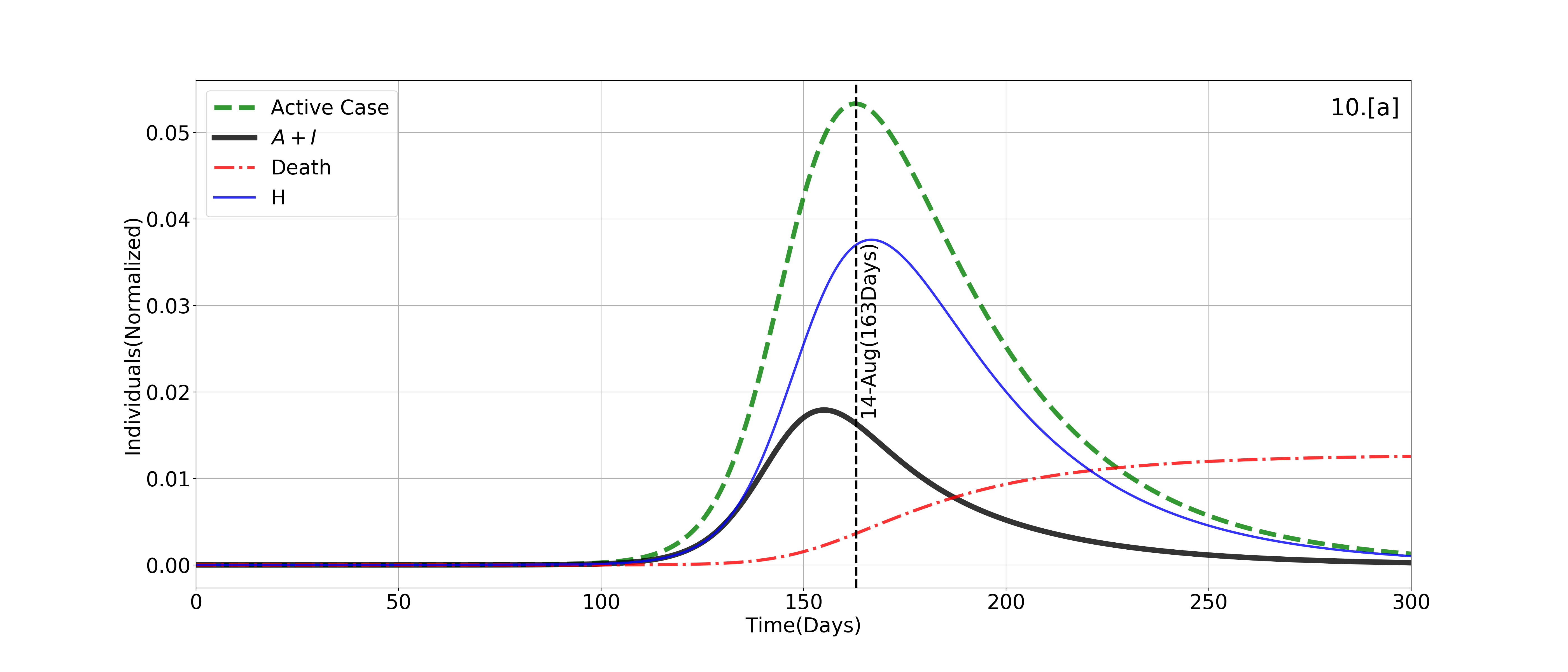}
    \includegraphics[width=1.0\textwidth]{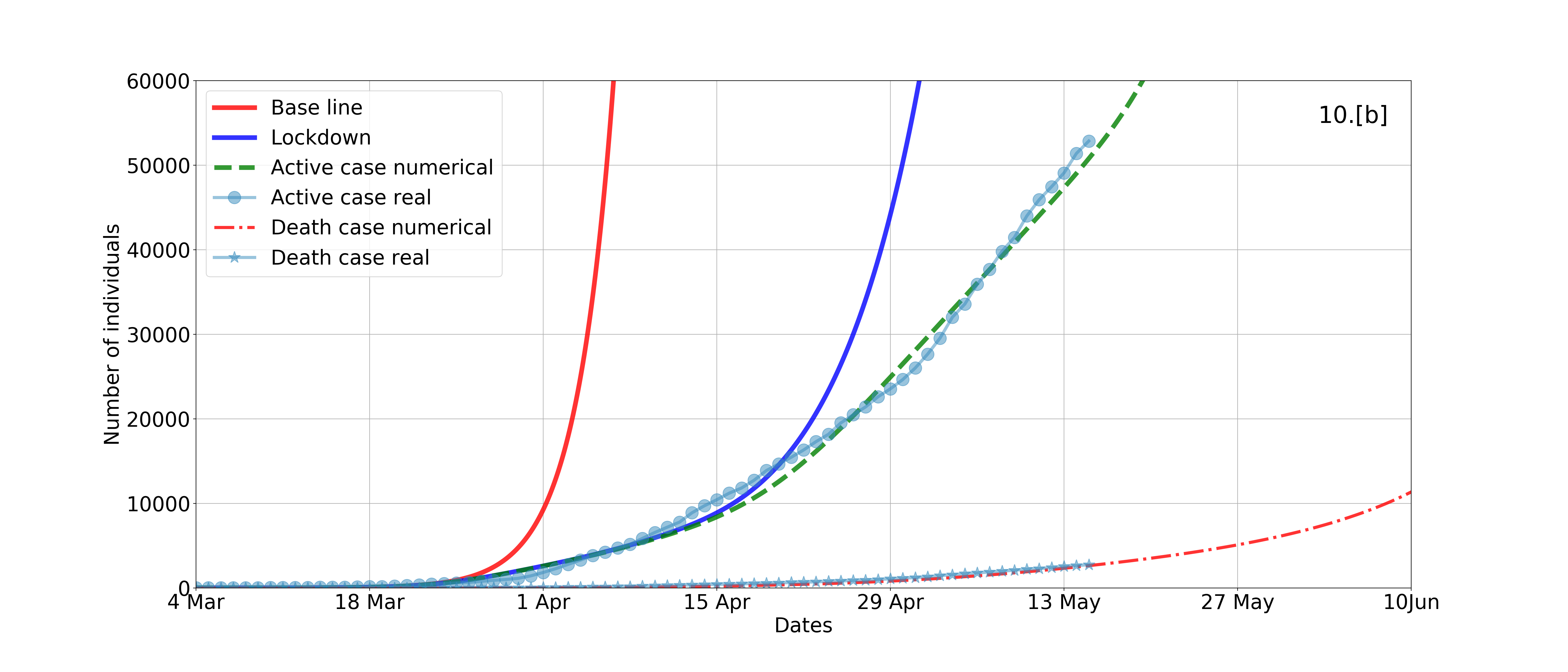}
    \caption{\textbf{SEAIRD-Control :} Subplot [a] depicts active infections under lockdown strategy proposed in Fig. \ref{fig:SLD} with social distancing parameters: $\beta_{max}=0.37,~\beta_{min}=0.21,~t_m=49,~k =0.2$. Model parameters: $\alpha_d = 0.05,~  \rho =0.75,~ \chi = 0.29,~ \alpha_1 = 0.7,~ \alpha_2 = 0.2,~ \alpha_3 = 1 - (\alpha_1 + \alpha_2),~ \phi_{qh} = 1/10,~ q = 0.1,~ \gamma_{as} = 0.1,~ \delta_{ar} = 2/7,~ \phi_{sh} = 1/2,~ \delta_{sr} = 1/7,~ \delta_{hr} = (1-\alpha_d)/10,~ \eta = \alpha_d/10,~ f_{sa} = 0.1,~ f_{sh} = 0.1$. Subplot [b] depicts real data matching with simulated data. \emph{\bf Base line} : active case without social distancing and no lockdown; \emph{\bf Lockdown} : active case without social distancing but with lockdown. Remaining two trajectories (active and death cases simulated) are with social distancing and lockdown matching with corresponding real data.}
    \label{fig:simulation}
\end{figure}
The lockdown policy in Fig. \ref{fig:SLD} is used to convert the contact matrix as a matrix function of time variable. The initial time($t = 0$) of the model (\ref{contro_sys}) is May 04, 2020, that begins with non-zero infection, and the computation is carried out for 300 days. There are three important rapidly growing daily counts named as active cases, symptomatic and asymptomatic total, and hospitalized individuals and steadily increasing number of deaths due to the disease, see Fig. \ref{fig:simulation}.[a]. The active case count is seen to attain its peak around day 163 (August 14, 2020) and about 5 \% of the entire population is infected by the peak day. Furthermore, 1 \% of the population is estimated to meet death due to the disease this year. Finally, the epidemic is seen to wither out after 300 days, diminishing gradually in 135 days. The epidemic spreads with exponential growth for the first 50 days. Fig. \ref{fig:simulation} is again reproduced in a smaller window to show the match between the numerically computed data and the real data up to May 15, 2020.\\

\noindent In Fig. \ref{fig:simulation}.[b], we have plotted the numerically estimated growth of active cases in absence of both lockdown and social distancing function (`Base line'). The `Base line' is seen to match with the real data only up to first 21 days before it gains its exponential growth. Next, we have estimated the active case curve (`Lockdown') under the influence of lockdown intervention. This curve is in agreement with the real data up to the first 45 days and then exponential growth has been observed. In another simulation run, along with lockdown measure behavioural change due to social distancing practices is also accounted for with $t_m = 49$ days. Effect of this behavioural change is reflected in the numerically estimated active case profile which agrees very closely with the real data till May 15, 2020. Furthermore, the numerically computed death curve is also seen to completely match with its real counterpart. \\

\noindent One may like to understand and compare the extent of reduction in infection peak size and delay in its arrival for the two models proposed in Sections \ref{sec:03} and \ref{sec:04}. If we carefully observe Fig. \ref{fig:lc} plotted for the control free SEAIRD model (\ref{sys1}), the staggered lockdown peak at nearly 17\% is seen to be lowest among all other peaks (except the 100\% lockdown case) and also the most delayed peak that appears around 117 days. With the control measures model, the active case peak is reduced by nearly 12 \% followed by an additional delay of 50 days in the peak arrival, see Fig. \ref{fig:simulation}.  


\section{Lockdown Exit Strategy \& Subsequent Waves}\label{sec:05}
There are 29 states in India governed by democratically elected state governments. The spread and impact of COVID-19 on each state has been radically different due to various factors such as exposure to immigration, healthcare facilities, population density, urbanization, and the local government's policies and strategies for the pandemic management and control. Contact tracing and isolation \cite{contact_tracing} is one of the most efficient intervention strategies to significantly control the infection transmission. However, this strategy is not being implemented by the states uniformly. It can be efficiently implemented on relatively small and moderately dense populations like Singapore, Hong Kong and South Korea. For a large and densely populated country like India it may be possible to implement this strategy locally on a small scale due to relatively lesser number of tests per million of population and shortage of man power in administration of medical services \cite{LD:science:2020}.  Our model is not designed to study the contact tracing strategy explicitly but it has been considered as part of our lockdown and social distancing strategy. Similarly, border and travel restrictions are also part of the lockdown and social distancing strategy. Accordingly the modeler would have the flexibility to choose the extent of permissible social contacts during a certain time period in the entire course of the epidemic.\\

\begin{figure}[!h]
  \includegraphics[width=\textwidth]{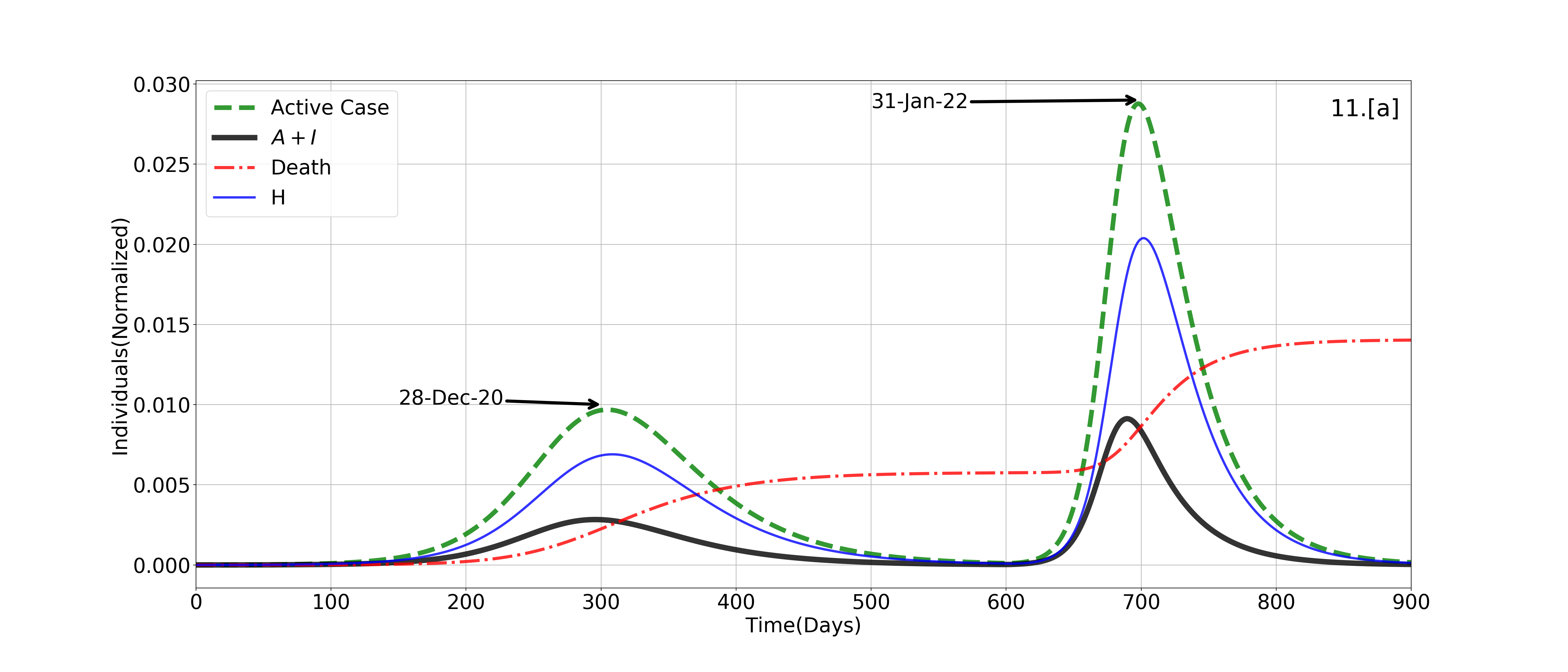}
   \includegraphics[width=\textwidth]{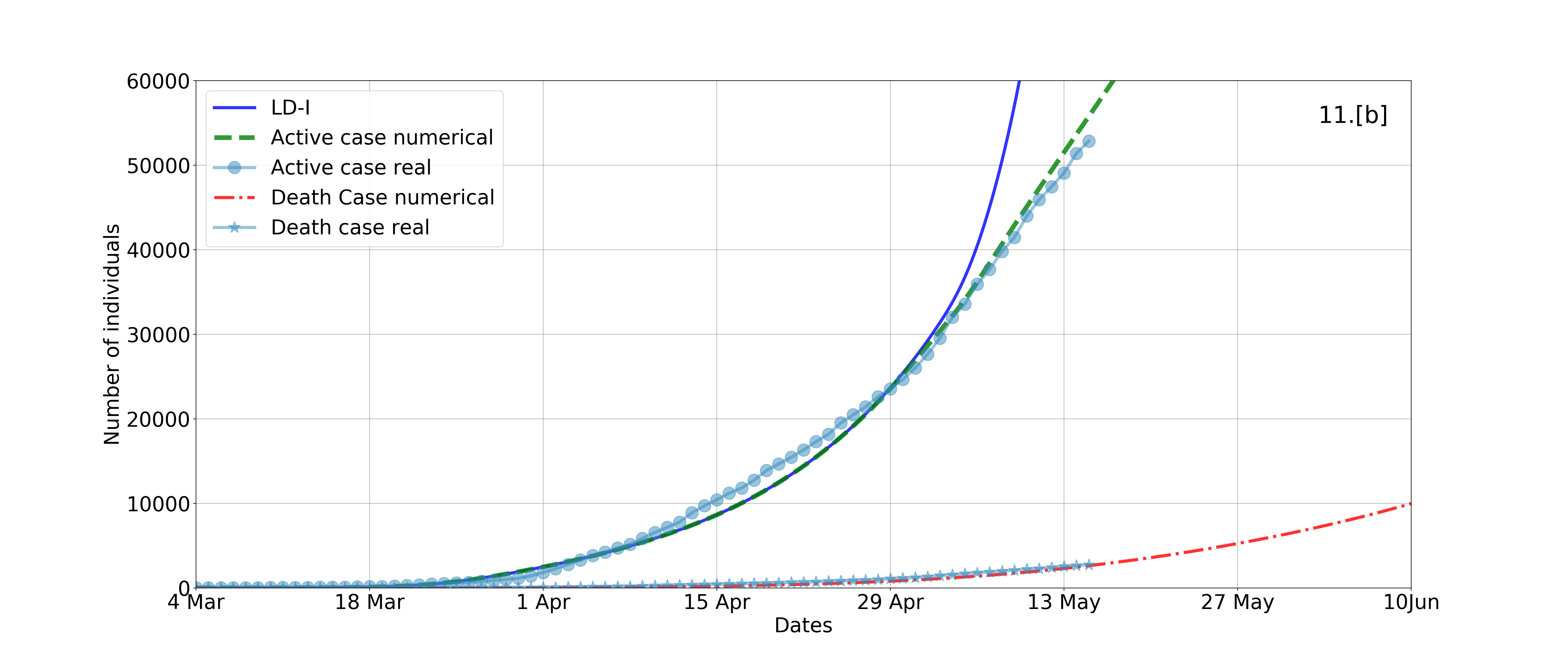}
    \caption{\textbf{SEAIRD-Control :}  Subplot [a] depicts model simulation results under LD-I policy as in Table \ref{table:peak} with social distancing parameters $\beta_{max}=0.379,~ \beta_{min} = 0.21, ~t_m=61,~ k = 0.5$. Model parameters are same as in Fig. \ref{fig:simulation}. Subplot [b] display the agreement between real data and simulation results of Subplot [a]. Also display the simulated active case trajectory ({\bf LD-I}) where social distancing is not considered.}\label{fig:2wave}
\end{figure}
\noindent Gilbert {\it et al.} \cite{LD:NatureMed:2020} have described three different approaches to the exit plans of lockdown which primarily emphasise on a continuous process of intense testing followed by de-confinement of population regions where herd immunity has been attained. Also suggest mathematical models are crucial to ensure that the proposed set of actions would be safe, to make certain that the level of transmission and severe cases remain below the healthcare system?s capacity. Some countries, like UK and Sweden, have envisaged that early onset of herd immunity might be a good way to stop or control the spread of the novel coronavirus. There are many reasons \cite{Herd_Immune:2020} to argue why herd immunity approach is not very efficient in stopping or slowing down the spread of SARS-CoV-2 transmission, rather, entailing into potentially  uncontrollable situation and significant increase in effective reproduction number. Gradually releasing and re-establishing lockdown (when active infections become too high) type on-off exit strategy is discussed in \cite{LD:UK_2020}. An analysis carried on UK COVID-19 data based on SEIR epidemic model reveals that on-off type exit strategy helps in containing the critical cases below the available hospital facilities.\\

\begin{figure}
    \includegraphics[width=\textwidth]{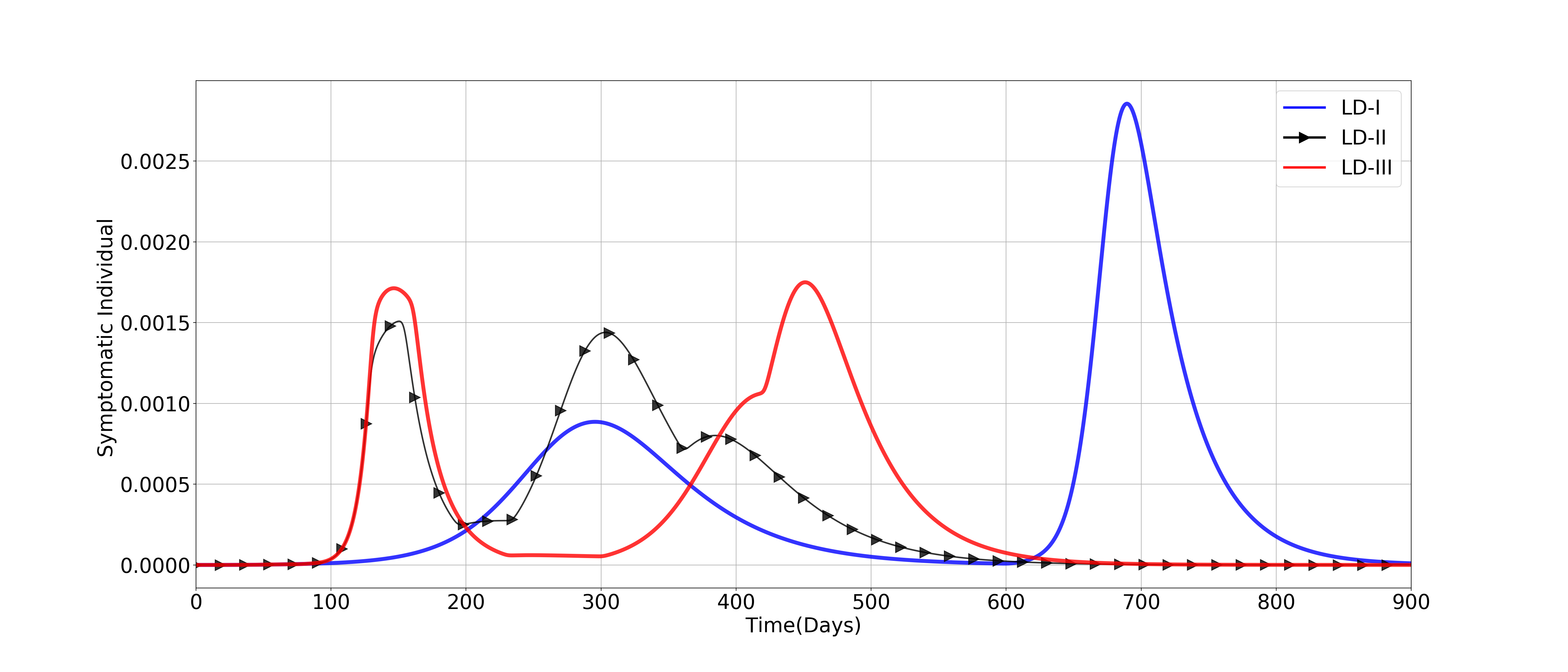}
    \caption{\textbf{SEAIRD-Control :} The growth in symptomatic individuals are plotted for three different lockdown policies as given in Table. \ref{table:peak} and then quantitatively explain in Table. \ref{table:symptomatic}}\label{fig:symwave}
\end{figure}
\begin{table}[h]
\begin{center}
\caption{Three different Lockdown (LD) policies for 900 days.}\label{table:peak}
\begin{tabular}{ |p{1.5cm}|p{13.5cm}| }
\hline
\multicolumn{2}{|c|}{\bf Lockdown (LD) Policies } \\
\hline\hline
\noindent{\bf LD-I} & 80\%(21-60) \textemdash \; \colorbox{orange}{Averaged: 60\%(61-600)} \textemdash \; 20\%(601-900)\\
\hline\hline
\noindent {\bf LD-II} & \colorbox{lightgray}{80\%(21-41) \textemdash \; 60\%(42-74) \textemdash \; 40\%(75-92)} \textemdash \; 20\%(93-127) \textemdash \; 60\%(128-152) \textemdash \; 80\%(153-192) \textemdash \; 60\%(193-232) \textemdash \; 40\%(233-359) \textemdash \; 20\%(360-900)\\
\hline\hline
\noindent {\bf LD-III} & \colorbox{lightgray}{80\%(21-41) \textemdash \; 60\%(42-74) \textemdash \; 40\%(75-92)} \textemdash \; 20\%(93-129) \textemdash \; 60\%(130-159) \textemdash \; 80\%(160-229) \textemdash \; 60\%(230-299) \textemdash \; 40\%(300-419) \textemdash \; 20\%(420-900) \\
 \hline
\end{tabular}
\end{center}
\end{table}
\noindent Table \ref{table:peak} presents three distinct lockdown policies, namely, LD-I, LD-II, and LD-III over a period of 900 days. Policy LD-I mean 80\% lockdown implementation for 40 days starting from day 21 since the first reported infection (04 Mar 2020). The assumption of 80\% lockdown is based on the fact that essential services remain functional. After the first phase of lockdown, whole of India was categorised into either of three zones -- Red, Orange, or Green -- depending on the intensity of infection. Red zones (80\% lockdown) correspond to containment zones; Orange zones (60\% lockdown) for moderately infected zones; Green zones (40\% lockdown) for zones below some threshold infection level. To represent the overall intensity of nationwide lockdown in India, we have considered the arithmetic mean of zonal lockdown intensities, which is 60\%.  Thus, in LD-I policy, first phase of strict lockdown is followed by 60\% and 20\% subsequent lockdown phases of 540 and 300 days, respectively.\\
        
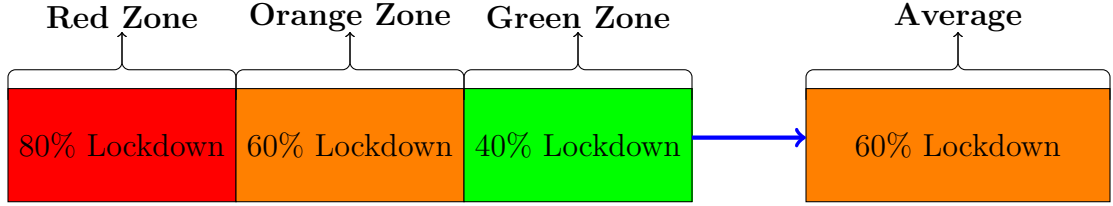
\begin{figure}
\centering
\begin{tikzpicture}
\path
(0,0)       node[onen] (N) {80\% Lockdown}
++(0:3)    node[twoo] (C) {60\% Lockdown}
+(0:3)    node[thrree] (O) {40\%  Lockdown}
++(0:8)    node[fouur] (A) {60\% Lockdown};
\draw[->,rounded corners=1mm] (-1.5,0.6) |- (-0.0,1) -- ++(0,0.5);
\draw[->,rounded corners=1mm] (1.5,0.6) |- (-0.0,1) -- ++(0,0.5);
\draw[->,rounded corners=1mm] (1.5,0.6) |- (3.0,1) -- ++(0,0.5);
\draw[->,rounded corners=1mm] (4.5,0.6) |- (3.0,1) -- ++(0,0.5);
\draw[->,rounded corners=1mm] (4.5,0.6) |- (6,1) -- ++(0,0.5);
\draw[->,rounded corners=1mm] (7.5,0.6) |- (6,1) -- ++(0,0.5);
\draw[->, blue, ultra thick](7.5,0.1) -- (9.0,0.1);
\draw[->,rounded corners=1mm] (9.0,0.6) |- (11.0,1) -- ++(0,0.5);
\draw[->,rounded corners=1mm] (13.0,0.6) |- (11.0,1) -- ++(0,0.5);
\node at (0.0,1.7) {\bf Red Zone};
\node at (3.0,1.7) {\bf Orange Zone};
\node at (6.0,1.7) {\bf Green Zone};
\node at (11.0,1.7) {\bf Average};
\end{tikzpicture}
\caption{Zone-wise lockdown and its averaging(Arithmetic Mean) is a part of LD-I} \label{fig:zoneld}
\end{figure}

\noindent Similarly, one may interpret the policies LD-II and LD-III in Table \ref{table:peak}. LD-I policy may be referred as a strict lockdown policy implemented for a prolonged time period, whereas, policies LD-II and LD-III may be pertained to an on-off type lockdown exit strategy as discussed earlier. We analyse the impact of these lockdown policies in containing the disease transmission in India by incorporating these policies in our SEAIRD-control model.\\

\noindent Fig. \ref{fig:2wave} depict the trajectories of asymptomatic and symptomatic infections ($A+I$), hospitalised cases ($H$), total active cases ($A+I+H$), and death cases under the SEAIRD-control model with LD-I policy. The active cases trajectory is seen to have two peaks -- the first peak appear around day 300 and the second peak around day 700. Further, it may be observed that the second peak size is nearly three times bigger as compared to the first one and a gap of nearly 400 days between the two peaks. The advent of the second peak, although after a gap of 400 days, suggests that the policy of strict lockdown for a prolonged time period does not break the virus transmission. Rather, it accumulates a large pool of completely naive susceptible population prone to get eventually infected in a big way. Thus, a policy like LD-I, in addition to incurring heavy socio-economic cost does not prove to be effective in eliminating the infectious disease. On the flip side this policy presents a fairly long time period (400 days) for carrying out vaccine development program and capacity building of the healthcare system. The effective reproduction number in this scenario starts its final descent after 600 days and eventually falls below 1, see Fig. \ref{fig:ERN}.\\

\noindent Fig. \ref{fig:symwave} displays the evolution of symptomatic infections under the SEAIRD-control model influenced by the three intervention policy scenarios, namely, LD-I, LD-II, and LD-III. In all the three scenarios, two infection peaks are observed. The peak size and position of the LD-I influenced trajectory is seen to be in conformity with the trajectories in Fig. \ref{fig:2wave}. With regard to LD-II and LD-III peaks, one may observe that both peaks in each trajectory are nearly same size. The major advantage of LD-III peaks over LD-II peaks is the significant time gap between the two peaks. Therefore, LD-III may be perceived to be an optimal lockdown policy. The effective reproduction number trajectories for all the three scenarios are plotted in Fig. \ref{fig:ERN}.\\ 
\begin{figure}
    \includegraphics[width=\textwidth]{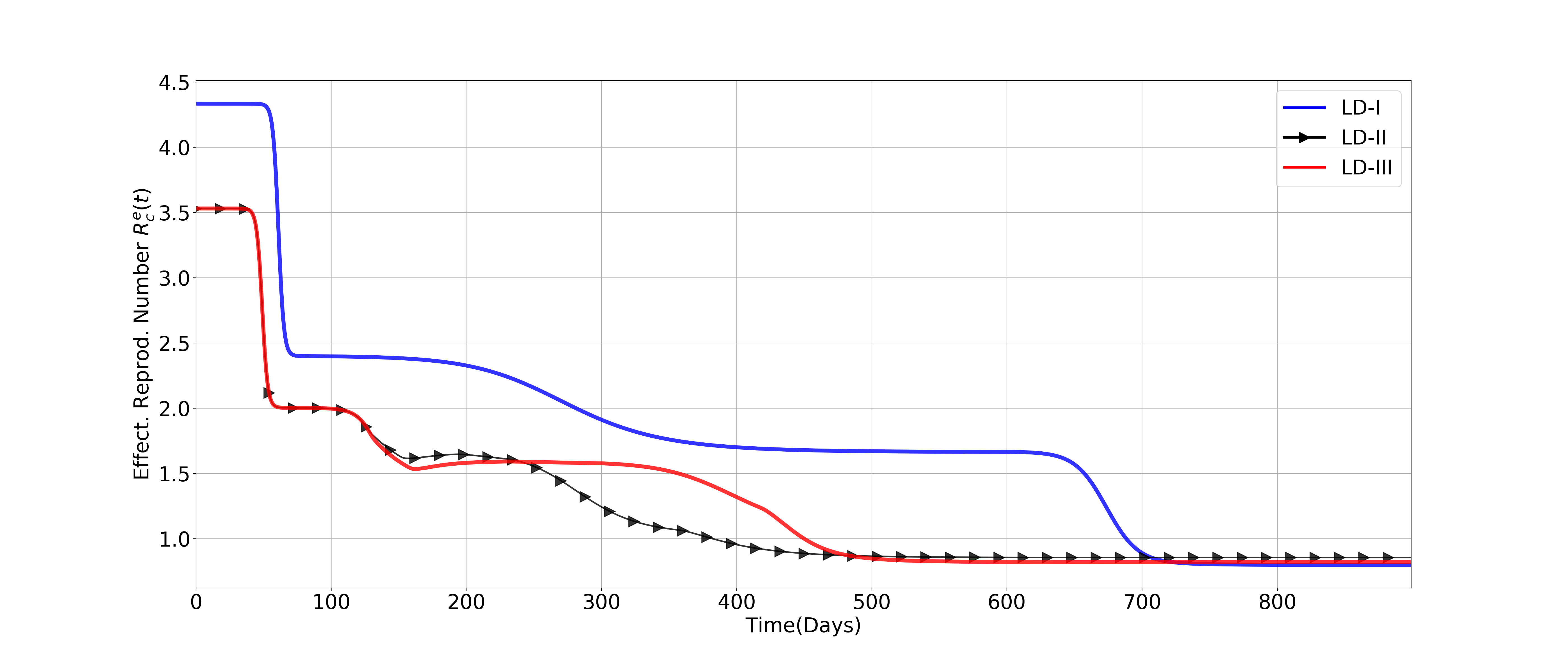}
    \caption{\textbf{SEAIRD-Control model :} Three different effective reproduction numbers corresponding to their lockdown policies and then how they are monotonically decreasing and finally became less than one, when the pandemic has lost its strength for growth.}\label{fig:ERN}
\end{figure}

\noindent The basic reproduction number ($R_0$) of COVID-19 has been initially estimated by the WHO to be in the range 1.4 to 2.5, as declared in the statement regarding the outbreak of SARS-CoV-2, dated January 23, 2020. Later in \cite{R0range:2020, R0covid:2020}, the researchers estimated the mean value of $R_0$ to be higher than 3.28 and median higher than 2.79, by observing the super spreading nature and the doubling rate of the novel Coronavirus. Our effective reproduction numbers are within the estimated range.\\

\begin{table}[!h]
\begin{center}
\caption{Symptomatic count during different lockdown policies}\label{table:symptomatic}
\begin{tabular}{ |p{1.3cm}||p{3.5cm}|p{3.5cm}|p{5.2cm}|  }
\hline
{\bf Policy}& 	{\bf First Peak} 		& 	{\bf Second Peak} 	& 	{\bf Symptomatic (\%) }\\
		& 	Date (No. of Days) 		& 	Date (No. of Days) 		& 	First Peak	 \textemdash \; 	 Second Peak	\\
\hline\hline
LD-I 		& 	28 Dec 2021 (299)	& 	23 Jan 2022 (692) 	& 	0.08\% \textemdash \;  0.29\%\\
\hline\hline
LD-II 	& 	2 Aug 2020 (151)	& 	01 Jan 2021 (303) 	&	 0.15\% \textemdash \; 0.15\% \\
\hline\hline
LD-III 	& 	29 Jul 2020 (147) 	& 	21 May 2021 (451) 	& 	0.175\% \textemdash \; 0.175\% \\
\hline
\end{tabular}
\end{center}
\end{table}

\noindent In Table \ref{table:symptomatic}, we present the position (Date and No. of Days) and size (Symptomatic \%) of both the symptomatic infections peaks achieved in the three lockdown policy scenarios.


\subsection{Discussions and challenges}\label{sec:05:challenge}
\noindent \emph{Discussions:} We have modeled and analysed the impact of general nationwide lockdown policies, quarantine and hospitalization measures, and social distancing practices. It would be interesting to model the effect of local level intervention strategies, like inter-state travel ban, shutdown during weekends, containment zones, multiple levels of location dependent lockdown based on the intensity of infections and different means of social distancing measures. At the outset it appears that lockdowns are being released in phases but on careful observation one may notice that containment zones are still under strict lockdown.
This may look like a large part of the country is free from strict interventions but it would be difficult to deny that on an average the lockdown intensity in India has never been below 60\% which correspond to LD-I policy as described in Section \ref{sec:05}. Especially, if we restrict our domain only to the infected regions of the country, then we may realise that lockdown in India has been strict for long time. 
As per our model prediction the first wave in Fig. \ref{fig:2wave} will attain its peak by the end of December 2020 with over 12 million active cases and a possible second wave with much bigger peak by the end of January 2022. The prediction of a relatively small first peak may alleviate the overwhelming strain on healthcare system of a large country like India. The development of medicine and vaccine can only stop the second wave and its worst impact. \\

\noindent In our model the number of compartments are limited to eight. More realistic models may have hundreds of compartments. With efficient computational algorithms like reduced order model strategies for faster and accurate computations one may arrive at long term reliable prediction. This type of large pandemic model results are more closer to the realistic data and helps in identifying the compartments and parameters having greater influence in the epidemic spread or control. There are several possible compartments that we are interested to consider in our future work, like intra and inter-state travel restrictions, regions where recovery rate is high due to better healthcare facilities, multiple strains of the virus, transmission through droplets, availability of testing kits, incorporating gender structure in the contact matrix, population density in urban and rural India and lockdown releasing strategies based on age and occupation classes.\\

\noindent \emph{Challenges:} There are multiple challenges we have realised during the simulation and analysis of computational results while also targeting to match with the real data. 
\begin{enumerate}[(i)]
    \item The stability result for the SEIARD-control model is difficult to establish due to a large number of parameters involved in it. Theoretical results on stability are open for future work. 
    
    \item Theoretically, there is no proof of how and when the subsequent peaks will appear after the first peak. Based on computational observations the possibility of a second wave is predicted.
    
    \item The model parameters are assumed to be independent of both time and age group. If the model parameters can be estimated corresponding to each age group then predicted numbers may change significantly. 
    
    \item As time progresses containment zones distribution also changes, and due to this, the lockdown intensity varies with time and locations. How efficiently one can measure the effective lockdown of a highly populated country at a particular instant of time and its impact on the predictions?
\end{enumerate}


\section{Conclusions}\label{sec:06}
In this paper, we have proposed a mathematical study of the evolution of COVID-19 in India using the SEAIRD type epidemiological models. An improvised model that takes into consideration various factors such as age-structured social contact pattern, lockdown measures, contact tracing, quarantine, hospitalization, time dependent incidence parameter as another form of social distancing measure has been developed. The proposed model is shown to match with the real COVID-19 data of India, active cases and death cases till May 15, 2020. 
Further, we have studied the impact of three different types of lockdown policies with exit plans over a period of 900 days. Several interesting observations are made in the process of fine tuning the mathematical model towards the twin goals of capturing the realistic phenomena and making credible suggestions to policy makers involved in the epidemic management. Here we summarise the observations.
\begin{enumerate}[(a)]
\item The SEAIRD-control pandemic model parameters are traced back to match the real data for Indian COVID-19 cases.

\item Effective reproduction number is computed for the SEAIRD-control model. Disease free and endemic equilibrium points are proved to be locally stable for the SEAIRD model.

\item In Subsection \ref{sec:03:LD}, we have computationally established that slowly decreasing staggered lockdown is comparatively better lockdown exit strategy in terms of keeping the infection levels low.
 
\item The social distancing function in subsection \ref{sec:04:SD} is another control function incorporated to slowdown the rate of new infection with time and account for behavioural change in the population. 

\item The consequence of implementing strict lockdown for prolonged time period could be the advent of a bigger second wave of infections, albeit after a fairly long time gap and in the event of no progress in potent vaccine development.
 
\item The staggered lockdown policies proposed in Section \ref{sec:05} are designed in such a way that both the waves can be kept under control in order to facilitate hospital treatments for the critical patients.

\item Zone-wise lockdown policy implemented in India has been modeled to match with the real data as accurately as possible. Our model prediction for July 20, 2020 is approximately 436,885 active cases whereas, the real data for the same date is 401,606.
  
\item As per our study, peak of the first wave may arrive no later than the end of December 2020. The policy maker may not have enough time to control the second wave if there is no progress in developing viable medicines or vaccines.

\end{enumerate}

\section{Acknowledgements}
NM~:~ Work of this author was partially funded by grant MTR/2019/001366 of the Science and Engineering Research Board of India.\\
SR~:~ Work of this author was partially funded by grant MSC/2020/000289 of the Science and Engineering Research Board of India.\\
MB \& SN~:~ Acknowledges IIITDM Kancheepuram for their research fellowship and facilities, more importantly during this COVID-19 lockdown period.

\vspace{-1.0cm}
\section*{~~~}

\end{document}